\documentclass{article}
\usepackage[utf8]{inputenc} 
\usepackage[T1]{fontenc}
\usepackage{url}              
\usepackage{cite}             

\usepackage[cmex10]{amsmath}  
\interdisplaylinepenalty=1000 
\usepackage{mleftright}       
\mleftright                   

\usepackage{graphicx}         
\usepackage{booktabs}         
\usepackage{mathdots}




\usepackage[utf8]{inputenc} 
\usepackage[T1]{fontenc}
\usepackage{url}
\usepackage{ifthen}
\usepackage{cite}
\usepackage{algorithm}
\usepackage{algpseudocode}
\usepackage[cmex10]{amsmath} 
\usepackage{amssymb}
\usepackage{xcolor}

\usepackage{subcaption}
\usepackage{comment}
\usepackage{tabularx}
\usepackage{mathtools}

\usepackage{tikz}
\usetikzlibrary {arrows.meta,automata,positioning} 

\usepackage{amsfonts, amsthm}

\usepackage{graphicx}%
\usepackage{multirow}%
\usepackage{amsmath,amssymb,amsfonts}%
\usepackage{amsthm}%
\usepackage{mathrsfs}%
\usepackage[title]{appendix}%
\usepackage{xcolor}%
\usepackage{textcomp}%
\usepackage{manyfoot}%
\usepackage{booktabs}%
\usepackage{algorithm}%
\usepackage{algorithmicx}%
\usepackage{algpseudocode}%
\usepackage{listings}%

\usepackage{hyperref}
\usepackage{mathtools}
\usepackage{enumerate}
\usepackage{tabularx}
\usepackage{makecell}

\usepackage{tikz}
\usetikzlibrary{matrix}
\usepackage{ulem}
\usepackage{caption}
\usepackage{lipsum}
\usepackage{parskip}
\usepackage{resizegather}

\usepackage{geometry}

\makeatletter
\newcommand{\multiline}[1]{%
  \begin{tabularx}{\dimexpr\linewidth-\ALG@thistlm}[t]{@{}X@{}}
    #1
  \end{tabularx}
}
\makeatother

\newtheorem{theorem}{Theorem}

\newtheorem{lemma}{Lemma}

\theoremstyle{definition}
\newtheorem{definition}{Definition}
\theoremstyle{remark}
\newtheorem{remark}{Remark}

\usepackage{graphicx}%
\usepackage{multirow}%
\usepackage{amsmath,amssymb,amsfonts}%
\usepackage{amsthm}%
\usepackage{mathrsfs}%
\usepackage[title]{appendix}%
\usepackage{xcolor}%
\usepackage{textcomp}%
\usepackage{manyfoot}%
\usepackage{booktabs}%
\usepackage{algorithm}%
\usepackage{algorithmicx}%
\usepackage{algpseudocode}%
\usepackage{listings}%
\usepackage{cleveref}
\usepackage{enumitem} 
\usepackage{anyfontsize}

\raggedbottom

\title{Construction of LDPC convolutional codes with large girth from Latin squares}
\author{Elisa Junghans and Julia Lieb}
\date{}

\begin{document}

\maketitle

\abstract{Due to their capacity approaching performance low-density parity-check (LDPC) codes gained a lot of attention in the last years. The parity-check matrix of the codes can be associated with a bipartite graph, called Tanner graph. To decrease the probability of decoding failure it is desirable to have LDPC codes with large girth of the associated Tanner graph. Moreover, to store such codes efficiently, it is desirable to have compact constructions for them. In this paper, we present constructions of LDPC convolutional codes with girth up to $12$ using a special class of Latin squares and several lifting steps, which enables a compact representation of these codes.
With these techniques, we can provide constructions for well-performing and efficiently storable time-varying and time-invariant LDPC convolutional codes as well as for LDPC block codes.
}

\section{Introduction}
Low-density parity-check (LDPC) codes were first introduced by Gallager in 1962 \cite{gallager1962low}.
In recent years, these codes gained a lot of interest because of their capacity approaching performance with message passing algorithms together with their low encoding and decoding complexity.
LDPC codes are characterized by the property of possessing a sparse parity-check matrix and can be described via a bipartite graph, called Tanner graph \cite{tanner1981recursive}. These properties can be generalized to the setting of convolutional codes, both (periodically) time-varying and time-invariant, to obtain LDPC convolutional codes. These codes are also known as spatially coupled LDPC codes and were introduced by Jimenez-Felstrom and Zigangirov in 1999 \cite{felstrom1999time}.
LDPC convolutional codes have been shown to be capable of achieving the same capacity-approaching performance as LDPC block codes with message passing decoding algorithms.
For these decoding algorithms to perform well, for block codes as well as convolutional codes, it is desirable to maximize the girth, i.e. the length of the shortest cycle, of the associated Tanner graph, see e.g. \cite{harmful_objects}.

While it is possible to find well-performing LDPC codes via random search, it is still desirable to construct such codes that additionally allow for some kind of compact representation in order to store them efficiently.
For this reason, there is a huge amount of papers on quasi-cyclic LDPC codes. Moreover, several papers use combinatorial constructions to achieve a compact representation, see e.g. \cite{xie2016euclidean,zhang2022spatially,zhang2016time}.

Another construction technique for LDPC codes that is frequently used, mainly to remove harmful cycles in the Tanner graph, is constructing first a so-called protograph
and then applying some lifting procedure to expand this graph, see e.g.
\cite{karimi2013girth}\cite{smarandache2022unifying}\cite{fang2015survey}\cite{mitchell2014quasi}.
Similar techniques have also been exploited for the construction of LDPC convolutional codes, see e.g. \cite{scprotograph}, \cite{qcliftingprotograph}, \cite{ccprotograph}.

There is a large variety of papers showing the excellent performance of LDPC convolutional codes, see e.g.
\cite{pseudocodeword}\cite{implementation}\cite{implementation2}\cite{decoding}\cite{decoding2}.
Often LDPC convolutional codes are constructed from LDPC block codes via so-called unwrapping techniques. For most of these constructions the obtained LDPC convolutional codes outperform the LDPC block codes they were constructed from, see
\cite{survey}\cite{convolutionalgain}\cite{QC}\cite{circulant}\cite{rate1/2fromblock}\cite{proto_unwrapping_performance}\cite{qcunwrapping}.
As mentioned above, it is important to maximize the girth of LDPC (convolutional) codes. In \cite{girthproperties}, upper bounds for the girth for certain types of LDPC convolutional codes are presented but concrete examples of codes are only obtained via computer search. In \cite{unwrapping_maxgirth}, time-varying LDPC convolutional codes with large girth are constructed from LDPC block codes with large girth.
Explicit constructions for LDPC convolutional codes can be found e.g. in \cite{monomial}, \cite{alfarano2021construction} or \cite{highrate_lifting}. The last of these papers only considers high-rate codes but also uses some lifting technique starting from circulant matrices.

In this paper, we present a construction for periodically time-varying LDPC convolutional codes starting from a special class of orthogonal Latin squares. To achieve a larger girth, we apply several lifting steps to the original construction. The definition of these codes via concrete Latin squares and well-determined lifting steps allows for a very compact representation of these codes.
We use similar techniques to also construct time-invariant LDPC convolutional codes of large girth and with a compact representation. Moreover, we use our techniques to increase the girth of the LDPC block code construction from \cite{BlockCodes}, which also uses Latin squares to define the parity-check matrices of the codes.

The paper is structured as follows.
In Section 2, we provide the definitions and basics on convolutional codes, LDPC codes and Latin squares that we will need in the following parts of the paper. In Section 3, we present our main results, i.e. the construction of periodically time-varying convolutional codes of girth up to 12 using a special class of Latin squares and several lifting steps. In Section 4, we use similar techniques to construct time-invariant LDPC convolutional codes of large girths. In Section 5, we use special lifting steps to remove all 6 cycles from the LDPC block code construction based on Latin squares from \cite{BlockCodes}.

\section{Preliminaries}
\subsection{Convolutional codes}\label{subsec CC}

In this subsection, we introduce time-invariant and time-varying convolutional codes.
These codes can be defined over any finite field, however, in this paper we only consider binary convolutional codes, i.e. codes over the finite field with $2$ elements, denoted by $\mathbb{F}_2$. Furthermore, we denote by $\mathbb F_2[z]$ the polynomial ring over $\mathbb F_2$.
  
 \begin{definition}
    An $(n,k)$ binary (time-invariant) \textbf{convolutional code} $\cal C$ is defined as an $\mathbb F_2[z]$-submodule of $\mathbb{F}_2[z]^n$ of rank $k$. Hence, there exists a polynomial \textbf{generator matrix} $G(z) \in \mathbb F_2[z]^{k \times n}$ whose rows form a basis of $\cal C$, i.e.,
    $$
{\cal C}= 
\{v(z) \in \mathbb{F}_2[z]^n \mid v(z)=u(z)G(z) \mbox{ for some } u(z) \in \mathbb{F}_2[z]^k \}.
    $$
  \end{definition}

The generator matrix of a convolutional code is not unique and two polynomial matrices
$G(z),\tilde G(z) \in \mathbb F_2[z]^{k \times n}$ are generator matrices of the same convolutional code if and only if $\tilde G(z)=U(z)G(z)$ for 
$U(z) \in \mathbb F_2[z]^{k \times k}$ which has an inverse over $\mathbb F_2[z]$.

If any generator matrix of a convolutional code $\mathcal{C}$ is \textbf{leftprime}, i.e., has a polynomial right inverse, then the same is true for all generator matrices of $\mathcal{C}$.
In this case, there exists a full row-rank \textbf{parity-check matrix} $H(z)\in\mathbb F_2[z]^{(n-k)\times n}$ 
such that
$$\mathcal C := \{v(z)\in\mathbb F_2[z]^n \mid H(z)v(z)^\top={0}\}.$$

We write $H(z)=\sum_{i=0}^{\mu}H_iz^i$ with $H_i\in\mathbb{F}_2^{(n-k)\times n}$ and $H_{\mu}\neq 0$ and define $\deg(H(z))=\mu$.

With this notation, we can expand the kernel representation $H(z)v(z)^\top=0$ for $s = \deg(v)$ and $v(z)=\sum_{i=0}^s v_iz^i$ where $v_i\in \mathbb{F}_2^n$ in the following way:
\begin{equation}\label{eq:kerH}
H_sv^\top := \begin{bmatrix}
H_0 & & & &  \\
\vdots & \ddots & & &  \\
H_{\mu} & \cdots & H_0 & &  \\
 & \ddots &  &\ddots &  \\
 & & H_{\mu} &\cdots & H_0 \\
 & & & \ddots & \vdots \\
 & & & & H_\mu
\end{bmatrix}\begin{bmatrix}
v_0 \\ v_1 \\ \vdots \\ v_s
\end{bmatrix}=0.
\end{equation}

The matrix $H_s$ in the above equation is called $s$-th
 \textbf{sliding parity-check matrix}. 

Using the representation of a (time-invariant) convolutional code given in \eqref{eq:kerH}, we next present an analogue definition for time-varying convolutional codes.

 Let $n,k,\mu,s\in \mathbb{N}$ with $k<n$ and $H_j(t)\in \mathbb{F}_2^{(n-k)\times n}$ for $j\in\{0,\dots,\mu\}$ and $t\in\mathbb{N}_0$.
    A time-varying \textbf{sliding parity-check matrix} $H_{[0,s]}$ is defined through
    \begin{align*}
        H_{[0,s]} :=\begin{pmatrix}
            H_0(0)\\
            H_1(0)& H_0(1)\\
            \vdots& H_1(1)& \ddots &\\
            H_\mu (0) &\vdots &\ddots &H_0(s)\\
            & H_\mu(1) & & H_1(s)\\
            &&\ddots& \vdots\\
            &&& H_\mu(s)
        \end{pmatrix}\in \mathbb{F}_2^{(\mu+s+1)(n-k)\times (s+1)n}.
    \end{align*}
   
\begin{definition}
For $v(z)=\sum_{i=0}^s v_iz^i\in\mathbb F_2[z]^n$ define $v:=(v_0,\hdots,v_s)$ with $s=\deg(v)$.
    A time-varying $(n,k)$ \textbf{convolutional code} $\mathcal{C}$ is defined as 
    $$\mathcal{C}:=\{v(z)\in\mathbb F_2[z]^n \mid H_{[0,s]}v^\top=0 \text{ for } s=\deg(v)\}.$$
    A time-varying $(n,k)$ convolutional code $\mathcal{C}$ has \textbf{period} $T$ if $H_j(t)=H_j(t+T)$ for all $j\in\{0,\dots,\mu\}$ and $t\in\mathbb{N}_0$ and if $T=1$, we obtain a time-invariant convolutional code.
\end{definition}    

Since time-varying convolutional codes are a generalization of time-invariant convolutional codes, we present the following definitions and results only for the general case of time-varying convolutional codes.

For convolutional codes, there exist different distance notions to measure different aspects of the error-correcting capability of the code.
In this paper, we will consider the free distance, which measures how many errors can be corrected throughout the whole codeword, and the $j$-th column distances, which is a measure for how many errors can be corrected sequentially with time-delay (at most) $j$, assuming that $v_i$ (containing possible errors) is received at time-instant $i$.

The (Hamming) weight of a vector $v(z)=\displaystyle \sum_{i \in \mathbb N_0} v_i z^i \in \mathbb{F}_2[z]^n$ is defined as
$$
{\rm wt}(v(z))= \sum_{i \in \mathbb N_0} {\rm wt}(v_i)
$$
where ${\rm wt}(v_i)$ is the number of nonzero entries of $v_i$. 

\begin{definition}
    Let $\cal C$ be an $(n,k)$ time-varying convolutional code. The \textbf{free distance} of $\cal C$ is
    $$
    d_{free}({\cal C}):=min\{{\rm wt}(v(z)) \mid v(z) \in {\cal C}\backslash \{0\}\}.
    $$
     For any $j\in\mathbb{N}_0$ we define the \textbf{$j$-th column distance} of $\mathcal{C}$ as 
    \begin{align*}
        d_j^c(\mathcal{C}) &:= \min \{ wt((v_0,\dots,v_j)) \mid H_j^c(t) (v_0,\dots,v_j)^T=0, v_0\neq 0, t\in \mathbb{N}_0 \}
    \intertext{for the \textbf{$j$-th truncated sliding parity-check matrix at time $t\in \mathbb{N}_0$}}
        H_j^c(t)&:= \begin{pmatrix}
            H_0(t)\\
            H_1(t) & H_0(t+1)\\
            \vdots & \vdots &\ddots\\
            H_j(t) & H_{j-1}(t+1) &\cdots &H_0(t+j)
        \end{pmatrix}\in \mathbb{F}_2^{(j+1)(n-k)\times (j+1)n}.
    \end{align*}
\end{definition}

If $\mathcal{C}$ is time-invariant, i.e. for all $j\in\mathbb N_0$, $H^c_j(t)=H^c_j(t^\prime)$ for all $t,t^\prime \in\mathbb N_0$, this definition of the $j$-th column distance simplifies to
\begin{align*}
        d_j^c(\mathcal{C}) &= \min \{ wt((v_0,\dots,v_j)) \mid H_j^c (v_0,\dots,v_j)^T=0, v_0\neq 0 \}
    \end{align*}
where $H_j^c:=H_j^c(t)$.

To calculate the $j$-th column distance of a time-varying convolutional code, we have to calculate for each fixed $t\in\mathbb N_0$, the column distance $d_{j,t}^c$ of the time-invariant convolutional code with sliding parity-check matrix $H_j^c(t)$ and then take the minimum of all these values.

The following theorem shows how to calculate the column distances of a time-invariant convolutional code using its parity-check matrix.
    
    \begin{theorem}\label{Th djc Hjc}
    \cite[Proposition 2.1]{gluesing2006strongly}
           Let $\mathcal{C}$ be a time-invariant $(n,k)$ convolutional code and $d\in\mathbb{N}$. Then, the following properties are equivalent:
        \begin{enumerate}[label=(\roman*)]
            \item  $d_j^c(\mathcal{C})=d$
            \item none of the first $n$ columns of $H_j^c$ is contained in the span of any other $d-2$ columns and one of the first $n$ columns of $H_j^c$ is in the span of some other $d-1$ columns of that matrix.
        \end{enumerate}
    \end{theorem}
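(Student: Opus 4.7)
The plan is to exploit the standard dictionary between codewords of $\mathcal{C}$ with $v_0\neq 0$ and nontrivial $\mathbb F_2$-linear dependencies among the columns of $H_j^c$ that involve at least one of the first $n$ columns. Writing $v=(v_0,\ldots,v_j)^\top$, the equation $H_j^c v^\top = 0$ says precisely that the columns of $H_j^c$ indexed by the support of $v$ sum to zero; the condition $v_0\neq 0$ translates into the requirement that at least one column from the first block of $n$ columns appears in this dependency; and $\mathrm{wt}(v)$ equals the number of columns involved. Under this dictionary, $d_j^c(\mathcal C)=d$ means that the smallest size of such a dependency is exactly $d$.

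For $(i)\Rightarrow(ii)$, pick a weight $d$ codeword $v$ with $v_0\neq 0$; among the nonzero entries of $v_0$ choose an index corresponding to a column $c$ of $H_j^c$ in the first block. Then $c$ equals the sum of the remaining $d-1$ selected columns, so $c$ lies in the span of some $d-1$ other columns. For the non-containment part, suppose for contradiction that some first-$n$ column $\tilde c$ were contained in the span of $d-2$ other columns; adding $\tilde c$ to that combination yields a dependency of at most $d-1$ columns involving a first-$n$ column, producing a nonzero codeword with $v_0\neq 0$ of weight at most $d-1$, contradicting the minimality.

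For $(ii)\Rightarrow(i)$, the "in the span of some $d-1$ other columns" clause produces a dependency of size at most $d$ involving a first-$n$ column, hence a nonzero codeword of weight at most $d$ with $v_0\neq 0$, so $d_j^c(\mathcal C)\le d$. Conversely, the "not in the span of any $d-2$ other columns" clause rules out any dependency of size at most $d-1$ that involves a first-$n$ column, so no nonzero codeword of weight smaller than $d$ can have $v_0\neq 0$, giving $d_j^c(\mathcal C)\ge d$. The only delicate point, and the main (minor) obstacle, is bookkeeping: one must verify that the columns appearing in the dependency are pairwise distinct so the weight count is exact and no cancellation drops it below $d$, which follows because the support of $v$ indexes distinct columns of $H_j^c$.
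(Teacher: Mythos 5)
Your proposal is correct. Note, however, that the paper itself gives no proof of this statement: it is imported verbatim by citation from \cite[Proposition 2.1]{gluesing2006strongly}, so there is no in-paper argument to compare against. Your dictionary between codewords $(v_0,\dots,v_j)$ with $v_0\neq 0$ and column dependencies of $H_j^c$ involving at least one of the first $n$ columns is exactly the standard argument behind the cited result, and your handling of both directions (minimality giving the non-containment clause, and the two clauses of (ii) giving the matching upper and lower bounds on $d_j^c$) is sound; the only bookkeeping worth making explicit is that ``in the span of at most $d-2$ columns'' implies ``in the span of some set of exactly $d-2$ other columns'' by padding with unused columns, which is harmless since $H_j^c$ has $(j+1)n$ columns.
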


From this theorem we can immediately deduce the corresponding statement for time-varying convolutional codes.

   \begin{theorem}\label{Th djc Hjc time}
           Let $\mathcal{C}$ be a time-varying $(n,k)$ convolutional code and $d\in\mathbb{N}$. Then, the following properties are equivalent:
        \begin{enumerate}[label=(\roman*)]
            \item $d_j^c(\mathcal{C})=d$
            \item there exists $t\in\mathbb N_0$ such that $d_{j,t}^c=d$ and $d_{j,t}^c\geq d$ for all $t\in\mathbb N_0$.
            \item  there exists $t\in\mathbb N_0$ such that one of the first $n$ columns of $H_j^c(t)$ is in the span of some other $d-1$ columns of that matrix and for all $t\in\mathbb{N}_0$ none of the first $n$ columns of $H_j^c(t)$ is contained in the span of any other $d-2$ columns of this matrix.
        \end{enumerate}
    \end{theorem}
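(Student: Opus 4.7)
The plan is to reduce everything to the time-invariant case already handled by Theorem~\ref{Th djc Hjc}. The key observation is that the three conditions of the theorem are, at their core, statements about the family of matrices $\{H_j^c(t)\}_{t \in \mathbb{N}_0}$, and each individual $H_j^c(t)$ is the $j$-th truncated sliding parity-check matrix of some (fictitious) time-invariant convolutional code whose column distance is $d_{j,t}^c$. So Theorem~\ref{Th djc Hjc} applies fiberwise at each $t$, and the only real work is bookkeeping the quantifiers.

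First I would establish (i)~$\Leftrightarrow$~(ii). This is essentially the remark made just before the theorem: by definition,
\begin{equation*}
d_j^c(\mathcal{C}) \;=\; \min\bigl\{\mathrm{wt}((v_0,\dots,v_j)) \,\big|\, H_j^c(t)(v_0,\dots,v_j)^\top = 0,\ v_0\neq 0,\ t\in\mathbb{N}_0\bigr\} \;=\; \inf_{t\in\mathbb{N}_0} d_{j,t}^c.
\end{equation*}
Hence $d_j^c(\mathcal{C}) = d$ is equivalent to saying that $d \leq d_{j,t}^c$ for every $t$ and that the value $d$ is attained for at least one $t$, which is exactly (ii). (One should briefly note that the infimum is actually a minimum because the $d_{j,t}^c$ take values in $\{1,\dots,(j+1)n\}$, a finite set.)

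Next I would show (ii)~$\Leftrightarrow$~(iii) by applying Theorem~\ref{Th djc Hjc} pointwise. For each fixed $t \in \mathbb{N}_0$, the matrix $H_j^c(t)$ coincides with the truncated sliding parity-check matrix of a time-invariant code whose $j$-th column distance equals $d_{j,t}^c$. Therefore Theorem~\ref{Th djc Hjc} tells us:
\begin{itemize}
\item $d_{j,t}^c \geq d$ is equivalent to: none of the first $n$ columns of $H_j^c(t)$ lies in the span of any other $d-2$ columns.
\item $d_{j,t}^c = d$ adds: one of the first $n$ columns of $H_j^c(t)$ lies in the span of some other $d-1$ columns.
\end{itemize}
Applying the first statement quantified over all $t$ gives the second ("for all $t$") condition in (iii); then applying the second statement at the specific $t$ where equality occurs in (ii) gives the "there exists $t$" condition in (iii). (Note that the "$d-2$" part of Theorem~\ref{Th djc Hjc}(ii) at this specific $t$ is already subsumed by the "for all $t$" clause, so only the "$d-1$" part needs to be mentioned in (iii).) The converse direction is immediate by reversing the same argument.

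The only place where one has to be slightly careful is the interaction between the two quantifiers in (iii): the condition "for all $t$, no set of $d-2$ other columns spans a first column" must hold globally, while the existence of a first column in the span of $d-1$ others need only hold at some $t$. I would explicitly verify that this asymmetry matches the characterization of $\min_t d_{j,t}^c = d$, which is really just the statement that a minimum over a set equals $d$ iff every element is $\geq d$ and some element equals $d$. No computation is required beyond invoking Theorem~\ref{Th djc Hjc} at each $t$, so the proof reduces to a clean corollary-style argument.
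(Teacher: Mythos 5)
Your proposal is correct and follows essentially the same route as the paper: the paper states this theorem without a written proof, noting only that it can be ``immediately deduced'' from Theorem~\ref{Th djc Hjc} via the preceding remark that $d_j^c(\mathcal{C})=\min_{t\in\mathbb{N}_0} d_{j,t}^c$, which is exactly the fiberwise reduction you spell out. Your explicit handling of the quantifier asymmetry in (iii) and of the attainment of the minimum just makes precise what the paper leaves implicit.
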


\subsection{LDPC codes}\label{subsec LDPC}

A binary \textbf{LDPC (Low-density parity-check) code} $\mathcal{C}$ is defined as the kernel of a sparse parity-check matrix $H\in\mathbb{F}_2^{N\times M}$.
A convolutional code is called LDPC if the associated sliding parity-check matrix is sparse. 
One can associate such a parity-check matrix with a bipartite graph called the \textbf{Tanner graph}, where the set of vertices consists of $M$ independent \textbf{variable nodes} $\{v_1,\dots,v_M\}$ and $N$ independent \textbf{check nodes} $\{w_1,\dots,w_N\}$ and a variable node $v_m$ is adjacent to a check node $w_n$ if and only if the $(n,m)$-entry of $H$ is equal to $1$.

A cycle in the Tanner graph always has an even length. The length of a shortest cycle is called the \textbf{girth} of the Tanner graph, the girth of $H$ or the girth of $\mathcal{C}$. A bigger girth provides less decoding failure (see e.g. \cite{harmful_objects}), thus we want to construct codes with large girth.
A cycle of length $2\ell$ in the Tanner graph is represented by a submatrix of $H$ of the form
\begin{align}
    \begin{pmatrix}
        1&1&0&\cdots &0\\
        0&1&1& &\vdots\\
        \vdots & & \ddots&\ddots&\vdots\\
        0&&& 1&1\\
        1&0&\cdots &0&1
    \end{pmatrix}\in \mathbb{F}_2^{\ell\times\ell} \label{2lcycle matrix}
\end{align}
up to row and column permutations.
Therefore, our aim is to construct parity-check matrices without such submatrices.

\subsection{Latin squares}\label{subsec LS}

In this subsection, we introduce Latin squares, which we will use for our construction of LDPC codes with large girth.  
  
  \begin{definition}
      A \textbf{Latin square} of order $p$ is a function $L:\{1,\dots,p\}\times\{1,\dots,p\}\rightarrow A$ with $\vert A\vert =p$ such that $L(i,j)=L(i^\prime,j)$ implies $i=i^\prime$ and $L(i,j)=L(i,j^\prime)$ implies $j=j^\prime$. We can write $L$ as a $p\times p$ matrix where each column and row contain each entry exactly once. \\
    A pair of Latin squares $L_1$ and $L_2$ is called \textbf{orthogonal} if for every $(i_1,i_2)\in A^2$ there is exactly one pair $(a,b)$ such that $L_1(a,b)=i_1$ and $L_2(a,b)=i_2$.
  \end{definition}
  
    For our construction of LDPC convolutional codes, we use the following specific construction of pairwise orthogonal Latin squares.

 \begin{theorem}\cite{Lidl_Niederreiter_1996}
         Let $p$ be a prime number, and let $A$ be the set $\{1,\dots,p\}$. Consider the matrices $L_1,\dots,L_{p-1}$ of order $p$ defined by
    \begin{align}
        L_r (a,b) := b-r(a-1) \mod p && r=1,\dots,p-1;\ a,b=1,\dots,p \label{Def LQ}
    \end{align}
    where we use $p$ instead of $0$. Then, $L_1,\dots,L_{p-1}$ form a set of pairwise orthogonal Latin squares.
 \end{theorem}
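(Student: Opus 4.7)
The plan is to verify the two required properties separately: first, that each individual $L_r$ is a Latin square, and second, that any pair $L_r,L_s$ with $r \neq s$ is orthogonal. In both steps I would work in $\mathbb{Z}/p\mathbb{Z}$, identifying the representative $p$ with $0$, and crucially use that $p$ is prime so that every nonzero residue is invertible.

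For the Latin square property of a fixed $L_r$ with $r \in \{1,\dots,p-1\}$, I would check the two injectivity conditions from the definition. If $L_r(a,b) = L_r(a',b)$, then the formula gives $r(a-1) \equiv r(a'-1) \pmod p$, and since $\gcd(r,p)=1$, cancellation yields $a \equiv a' \pmod p$; the constraint $a,a' \in \{1,\dots,p\}$ then forces $a = a'$. The other condition $L_r(a,b) = L_r(a,b')$ reduces immediately to $b \equiv b' \pmod p$, hence $b = b'$.

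For orthogonality of $L_r$ and $L_s$ with $r \neq s$ in $\{1,\dots,p-1\}$ and a fixed target pair $(i_1,i_2) \in A^2$, I need to show that the system $b - r(a-1) \equiv i_1 \pmod p$ and $b - s(a-1) \equiv i_2 \pmod p$ has exactly one solution $(a,b) \in \{1,\dots,p\}^2$. Subtracting the two congruences yields $(s-r)(a-1) \equiv i_1 - i_2 \pmod p$; since $r,s$ are distinct elements of $\{1,\dots,p-1\}$, the difference $s-r$ is a nonzero residue modulo $p$, hence invertible, which determines $a$ uniquely. Substituting back then determines $b$ uniquely. As a sanity check, since there are $p^2$ cells and $p^2$ target pairs, uniqueness for every target is equivalent to existence for every target.

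There is no genuine obstacle here; the argument is essentially linear algebra over $\mathbb{F}_p$. The only piece of bookkeeping worth keeping an eye on is the convention that representatives are taken in $\{1,\dots,p\}$ instead of $\{0,\dots,p-1\}$, but this is merely a relabelling of the residues and does not affect any of the bijectivity arguments.
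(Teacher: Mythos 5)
Your proposal is correct and complete. Note, however, that the paper itself gives no proof of this statement: it is quoted verbatim from the textbook of Lidl and Niederreiter (\cite{Lidl_Niederreiter_1996}) and used as a black box, so there is no in-paper argument to compare against. Your argument is the standard one for this classical construction: invertibility of $r \bmod p$ gives the row condition, the column condition is immediate, and for orthogonality the difference of the two congruences $(s-r)(a-1) \equiv i_1 - i_2 \pmod p$ pins down $a$ uniquely because $s-r \not\equiv 0 \pmod p$, after which $b$ is determined; the counting remark (injectivity of $(a,b) \mapsto (L_r(a,b), L_s(a,b))$ on a finite set of the right cardinality implies bijectivity) correctly disposes of existence. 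The only bookkeeping point worth making explicit is the one you already flagged: since $a \in \{1,\dots,p\}$, the quantity $a-1$ ranges over a complete residue system $\{0,\dots,p-1\}$, so ``determined mod $p$'' really does mean ``determined in the index set,'' and likewise for $b$ under the convention that $p$ stands for the residue $0$.
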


    For $r=1,\dots,p-1$ we define for the Latin square $L_r$ the incidence matrices $Q^r_1,\dots,Q^r_p\in \mathbb{F}_2^{p\times p} $ through
    \begin{align}
        Q_i^r(a,b) := \left\{\begin{array}{ll}1 & \text{if } L_r(a,b)=i \\ 0 & \, \text{otherwise} \\ \end{array} \right. .\label{Def Q}
    \end{align}  
    These matrices are permutation matrices because in a Latin square each entry appears in each row and column exactly once.

\section{Construction of time-varying LDPC convolutional codes with large girth}

  In this section, we present our main results, i.e. we present different constructions for time-varying LDPC convolutional codes with large girth using the Latin squares from the previous subsection.
  In a first step, we construct the parity-check matrix of such a code with girth $6$. Then, we apply several lifting steps to this parity-check matrix to obtain parity-check matrices of codes with girths $8,10$ and $12$.
  At the end of the section, we study the density and distance properties of the constructed parity-check matrices and codes, respectively.
    
\subsection{Construction of LDPC convolutional codes with girth 6}\label{subsec girth 6}

    For a fixed prime number $p$, using definitions (\ref{Def LQ}) and (\ref{Def Q}), let $H_0(t) := \begin{pmatrix} Q_1^{t \mod (p-1) +1} &\vline & I_p \end{pmatrix}$ and $H_i(t) := \begin{pmatrix} Q_{i+1}^{t \mod (p-1) +1} &\vline & 0_{p\times p} \end{pmatrix}$ for all $t\in\mathbb{N}_0$ and $i=1,\dots,\mu\leq p-2$.
    This results in the time-varying sliding parity-check matrix 
    \begin{align}
        H^0_{[0,s]} &:=\left(\begin{array}{cccccccccccc}
            Q_1^1 & I_p\\
            Q_2^1 & 0 &Q_1^2 &I_p\\
            \vdots&\vdots& Q_2^2&0& \ddots &\\
            Q_{\mu+1}^1&0 &\vdots&\vdots &\ddots &Q_1^{p-1}&I_p \\
            && Q_{\mu+1}^2&0 & & Q_2^{p-1}&0&Q_1^1&I_p\\
            &&&&\ddots& \vdots&\vdots&Q_2^1&0&\ddots\\
            &&&&& Q_{\mu+1}^{p-1}&0&\vdots&\vdots&\ddots & Q_1^R&I_p\\
            &&&&&&&Q_{\mu+1}^1&0& & Q_2^R&0\\
            &&&&&&&&&\ddots &\vdots&\vdots \\
            &&&&&&&&&&Q_{\mu+1}^R&0 
        \end{array}\right)\notag
    \end{align}
    with $R:= s\mod(p-1)+1$ and $ \mu\leq p-2$.
    The code corresponding to $H_{[0,s]}^0$ is a binary, time-varying $(2p,p)$ convolutional code with period $p-1$. We denote this code by $\mathcal{C}^0$.
    \begin{theorem}\label{theo H0 girth 6}
        The time-varying LDPC convolutional code $\mathcal{C}^0$ has girth at least $6$. 
    \end{theorem}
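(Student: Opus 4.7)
The plan is to show that the Tanner graph of $H^0_{[0,s]}$ contains no $4$-cycle; since cycle lengths in bipartite graphs are even, this yields girth at least $6$. By the submatrix characterization \eqref{2lcycle matrix} with $\ell = 2$, a $4$-cycle corresponds to a pair of distinct columns of $H^0_{[0,s]}$ that share two rows where both entries equal $1$. I will rule out this configuration by a case analysis on the type of the two columns, each being either an ``identity-part'' column coming from an $I_p$ block or a ``$Q$-part'' column coming from a $Q$ block.

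An identity-part column has exactly one nonzero entry in the whole sliding matrix, so any pair involving at least one such column fails the two-common-$1$ requirement. For two $Q$-part columns at the same time $t$, they are distinct columns of the vertical stack $\bigl[Q_1^{r_t};\, Q_2^{r_t};\, \cdots;\, Q_{\mu+1}^{r_t}\bigr]$, where $r_t := t \bmod (p-1) + 1$. Since each $Q_i^{r_t}$ is a permutation matrix, the two columns have their single $1$ in different rows of every row block, so they share no common row whatsoever.

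The heart of the argument is the case of two $Q$-part columns from different times: column $c_1 \in \{1,\dots,p\}$ at time $t_1$ and column $c_2 \in \{1,\dots,p\}$ at time $t_2$, with $\delta := t_2 - t_1 \in \{1,\dots,\mu\}$. A common row can only arise in row block $t_2 + k$ for some $k \in \{0,\dots,\mu - \delta\}$, where it involves $Q_{\delta+k+1}^{r_{t_1}}$ for column $c_1$ and $Q_{k+1}^{r_{t_2}}$ for column $c_2$. From $L_r(a,b) = b - r(a-1)$ it follows that the unique $1$ in column $b$ of $Q_i^r$ lies at row $a \equiv 1 + r^{-1}(b - i) \pmod{p}$. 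Equating the two row positions in row block $t_2 + k$ reduces the common-row condition to the linear equation
\begin{equation*}
(r_{t_2}^{-1} - r_{t_1}^{-1})\, k \;\equiv\; r_{t_2}^{-1}(c_2 - 1) - r_{t_1}^{-1}(c_1 - \delta - 1) \pmod{p}.
\end{equation*}

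The main obstacle is ensuring the coefficient $r_{t_2}^{-1} - r_{t_1}^{-1}$ is nonzero in $\mathbb{F}_p$, so that this equation has at most one solution $k$ and hence yields at most one common row --- too few for a $4$-cycle. Equivalently, I need $r_{t_1} \neq r_{t_2}$; but $r_{t_1} = r_{t_2}$ would force $\delta \equiv 0 \pmod{p-1}$, while $1 \le \delta \le \mu \le p - 2 < p - 1$. This is precisely where the choice of period $p-1$ and the bound $\mu \le p-2$ enter decisively. Assembling all cases gives the desired girth bound.
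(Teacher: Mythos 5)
Your proposal is correct and follows essentially the same route as the paper's proof: both reduce a putative $4$-cycle to the arithmetic of the Latin squares $L_r(a,b)=b-r(a-1)\bmod p$ and exploit that the two columns involve distinct parameters $r_{t_1}\neq r_{t_2}$ over the prime field, your at-most-one-solution argument for the linear congruence in $k$ being the paper's alternating-sum contradiction $(r_1-r_2)(a_2-a_1)\equiv 0 \pmod p$ in disguise. Your write-up is somewhat more explicit than the paper's about the edge cases (two columns from the same block column, and why the period $p-1$ together with $\mu\leq p-2$ forces $r_{t_1}\neq r_{t_2}$), but the essential mechanism is identical.
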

    \begin{proof}
        Assume that there is a cycle of length $4$. Then there exists a submatrix of $H_{[0,s]}^0$ of the form $\big(\begin{smallmatrix}
            1&1\\1&1
        \end{smallmatrix}\big)$. Since columns corresponding to the identity matrices in $H_{[0,s]}^0$ only have one entry equal to $1$, the $1$s forming the $4$-cycle belong to incidence matrices of Latin squares. Thus, there exist $i_1,i_2,r_1,r_2\in \{1,\dots,p-1\}$ and $ a_1,a_2,b_1,b_2\in \{1,\dots,p\}$ and $\alpha\in \mathbb{Z}$ with 
        $\alpha\neq 0$ such that \begin{align}
            \begin{pmatrix}
                Q_{i_1}^{r_1}(a_1,b_1) &Q_{i_2}^{r_2}(a_1,b_2)\\ Q_{i_1+\alpha}^{r_1}(a_2,b_1) & Q_{i_2+\alpha}^{r_2}(a_2,b_2)
            \end{pmatrix} = \begin{pmatrix}
                1&1\\1&1
            \end{pmatrix}\notag \\
            \intertext{which means}
            Q_{i_1}^{r_1}(a_1,b_1) = Q_{i_2}^{r_2}(a_1,b_2) = Q_{i_1+\alpha}^{r_1}(a_2,b_1) = Q_{i_2+\alpha}^{r_2}(a_2,b_2) = 1\notag .
        \end{align}
        Since a Latin square cannot contain two different entries at the same position, we get $a_1\neq a_2$. We also know that $r_1\neq r_2$ because no two incidence matrices of the same Latin square are in the same block row.
        This means that\\
        \begin{minipage}{0.44\textwidth}
            \begin{align}
                i_1 &= b_1-r_1(a_1-1)\mod p \label{4kreis 1}\\
                i_2&= b_2-r_2(a_1-1)\mod p \label{4kreis 2}
            \end{align}
        \end{minipage}
        \begin{minipage}{0.44\textwidth}
            \begin{align}
                i_1+\alpha &= b_1-r_1(a_2-1)\mod p\label{4kreis 3}\\
                i_2+\alpha &= b_2-r_2(a_2-1)\mod p\label{4kreis 4}
            \end{align}
        \end{minipage}
        \begin{align}
            \implies (\ref{4kreis 1})-(\ref{4kreis 2})-(\ref{4kreis 3})+(\ref{4kreis 4}):
            0&= (r_1-r_2)(a_2-a_1)\mod p, \notag
        \end{align}
        which is a contradiction because $r_1\neq r_2$ and $a_1\neq a_2$.
    \end{proof}
\begin{remark}
    We conjecture that the code $\mathcal{C}^0$ from Theorem \ref{theo H0 girth 6} has girth exactly $6$ for $p\geq 5$ and $\mu\geq 3$.
    For $p=5$, we know that this is true, since the $1$s at the positions $Q_3^1(1,3)$, $Q_1^3(1,1)$, $Q_3^3(2,1)$, $Q_4^2(2,1)$,$ Q_3^2(5,1)$ and $Q_4^1(5,3)$ form a cycle of length $6$.
\end{remark}
    
\subsection{Construction of LDPC convolutional codes with girth 8}\label{subsec girth 8}

    We are now going to modify the construction from Subsection \ref{subsec girth 6} and study how the cycles in such a parity-check matrix are located to obtain a code with girth $8$.
    \begin{definition}
        For $m\geq 1$, let $H_{[0,s]}^m$ be the matrix that is constructed from $H_{[0,s]}^{m-1}$ in the following way. Replace each entry of $H_{[0,s]}^{m-1}$ with a matrix of size $p\times p$. Each $0$ is replaced with $0_{p\times p}$, each $1$ that is in $I_p$ is replaced with $I_p$, and each $1$ that is located at position $(a,b)$ in $Q_i^r$ is replaced with $Q_a^r$.
        Furthermore, let $\mathcal{C}^m$ be the binary, time-varying $(2p^{m+1},p^{m+1})$ convolutional code with period $p-1$ that is derived from $H_{[0,s]}^m$.
    \end{definition}
    Each cycle of length $2\ell$ in the code $\mathcal{C}^m$ is located on $\ell$ (not necessarily different) block columns of $H_{[0,s]}^m$. Each of these columns contains only matrices that correspond to the same Latin square because there is only one $1$ in the columns that contain $I_p$ and $0$ everywhere else. We call these Latin squares $L_{r_1},\dots,L_{r_\ell}$ and say that $r_1,\dots,r_\ell$ correspond to the cycle of length $2\ell$. Since there is only one $1$ per row that corresponds to a certain Latin square, 
    each cycle of length $2\ell$ corresponds to $r_1,\dots,r_\ell\in \{1,\dots,p-1\}$ such that $r_j\neq r_{j+1},j=1,\dots,\ell-1$ and $r_1\neq r_\ell$.
Moreover, we will need the following lemma.

\begin{lemma}\label{lemma r pw gleich}
  If there exists a cycle of length $2\ell\leq 2(m+2)$ in $H_{[0,s]}^m$, then for each $i\in\{1,\dots,\ell\}$, there is $j\in\{1,\dots,\ell\}\setminus\{i\}$  such that $r_i=r_j$.
\end{lemma}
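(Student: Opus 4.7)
The plan is to translate the cycle into a system of mod-$p$ linear equations that comes from the recursive Latin square structure of $H^m$, and then to obtain a Vandermonde-type system whose rank properties, combined with the bound $\ell \le m+2$, force the required repetitions among the $r_i$. The first observation is that the cycle lives entirely in $Q$-block-columns: each $I$-block-column of $H^0$ lifts, after $m$ steps, to an identity of order $p^{m+1}$ (since every $1$ in an $I_p$ is replaced with another $I_p$), whose columns carry only one $1$ each and hence cannot support a cycle vertex. So each $v_i$ has a well-defined $r_i$ and the consecutive distinctness $r_i \neq r_{i+1}$ from the discussion above is available.

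I would next parametrize the cycle vertices by their full tower of coordinates: $w_j = (k_j, a_{j,0}, \ldots, a_{j,m})$ and $v_i = (t_i, b_{i,0}, \ldots, b_{i,m})$, where $k_j$ is the block row of $w_j$ in $H^0$, $t_i$ the time slot of $v_i$, and the remaining coordinates $a_{j,k}, b_{i,k}$ arise from the $k$-th lifting step. Unwinding the construction of $H^m$ together with $L_r(a,b) \equiv b - r(a-1)\pmod p$, each edge $(w_j, v_i)$ of the cycle produces the equations
\begin{align*}
    b_{i,0} - r_i(a_{j,0}-1) &\equiv k_j - t_i + 1 \pmod p,\\
    b_{i,k} - r_i(a_{j,k}-1) &\equiv a_{j,k-1} \pmod p, \qquad k=1,\ldots,m.
\end{align*}
Subtracting the equations for the two edges incident to a common column $v_j$, namely $(w_{j-1}, v_j)$ and $(w_j, v_j)$, eliminates all $b_{j,k}$'s and, with $\Delta_{j,k} := a_{j,k}-a_{j-1,k}$ and $\Delta_{j,-1} := k_j-k_{j-1}$, gives the clean recursion $-r_j \Delta_{j,k} \equiv \Delta_{j,k-1} \pmod p$ for $k = 0, \ldots, m$, which iterates to $\Delta_{j,k} \equiv (-r_j^{-1})^{k+1}\Delta_{j,-1} \pmod p$. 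Since $\sum_{j=1}^\ell \Delta_{j,k}$ telescopes to zero around the cycle for every $k$, the quantities $y_j := \Delta_{j,-1}$ satisfy
$$\sum_{j=1}^{\ell} r_j^{-k}\, y_j \equiv 0 \pmod p, \qquad k = 0, 1, \ldots, m+1,$$
which is an $(m+2)\times\ell$ Vandermonde system with nodes $r_j^{-1} \in \mathbb F_p^{*}$.

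To finish, I would argue by contradiction: suppose some $r_{i_0}$ appears exactly once among $r_1, \ldots, r_\ell$. Grouping the $y_j$'s by equal values of $r_j^{-1}$ reduces the system to one in the grouped sums $Y_z := \sum_{j: r_j^{-1}=z} y_j$ with a square Vandermonde matrix on $d \le \ell \le m+2$ distinct nodes, which is invertible over $\mathbb F_p$, so every $Y_z$ vanishes mod $p$. The group corresponding to $r_{i_0}^{-1}$ is the singleton $\{y_{i_0}\}$, so $y_{i_0} \equiv 0 \pmod p$, and the recursion then forces $\Delta_{i_0,k} \equiv 0 \pmod p$ for every $k = -1, 0, \ldots, m$. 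Because $w_{i_0}$ and $w_{i_0-1}$ are both incident to $v_{i_0}$ their block rows differ by at most $\mu \le p-2 < p$, and the $a$-coordinates lie in $\{1, \ldots, p\}$; in both ranges a mod-$p$ equality promotes to an integer equality, giving $w_{i_0} = w_{i_0-1}$ and contradicting the distinctness of cycle vertices. The main obstacle I anticipate is setting up the multi-index bookkeeping through the iterated lift and verifying that the Latin square equation is correctly inherited at every level; once that is in place the Vandermonde/telescoping argument is essentially automatic and the constraint $\ell \le m+2$ appears precisely as the row count needed for the coefficient matrix to have full column rank.
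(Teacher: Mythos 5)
Your proposal is correct and follows essentially the same route as the paper's proof: both unwind the lifted Latin-square equations level by level into the telescoping recursion $\Delta_{j,k-1}\equiv -r_j\Delta_{j,k} \pmod p$, assemble from it a Vandermonde system modulo $p$ (the paper with nodes $r_j$ acting on the bottom-level differences $c_{j-1}-c_j$, you with nodes $r_j^{-1}$ acting on the top-level differences $y_j$ --- an equivalent reformulation), use $\ell\le m+2$ to have enough rows, and dispose of a hypothetical lonely $r_{i_0}$ by grouping repeated nodes into an invertible reduced Vandermonde system. Your closing step, promoting the resulting mod-$p$ equalities at every level to the actual identity of the two check nodes adjacent to $v_{i_0}$, is a valid (and slightly more self-contained) version of the paper's appeal to $c_{i_0-1}\neq c_{i_0}$.
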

    
\begin{proof}
    Assume that there is a cycle of length $2\ell\leq 2(m+2)$ in $H_{[0,s]}^m$.
    Then there is also a cycle of length $2\ell$ in $H_{[0,s]}^0,H_{[0,s]}^1,\dots,H_{[0,s]}^{m-1}$. Such a cycle corresponds to an $\ell\times\ell$ matrix of the form (\ref{2lcycle matrix}). This means for $H_{[0,s]}^0$ that there exist $r_1,\dots,r_{\ell}\in \{1,\dots,p-1\}$ with $r_j\neq r_{j+1},j=1,\dots,\ell-1$ and $r_1\neq r_\ell$ and $a_1^0,\dots,a_\ell^0,a_1^1,\dots,a_\ell^1,b_1^0,\dots,b_\ell^0,b_1^1,\dots,b_\ell^1\in\{1,\dots,p\}$ as well as $\alpha_1,\dots,\alpha_{\ell-1}\in\mathbb{Z}$ such that the submatrix 
    \begin{align*}
        \left(\begin{smallmatrix}
            Q_{a_1^0}^{r_1}(a_1^1,b_1^1) & Q_{a_2^0}^{r_2}(a_1^1,b_2^1)&0&\cdots&0\\
            0& Q_{a_2^0+\alpha_1}^{r_2}(a_2^1,b_2^1)&\hspace{4mm} Q_{a_3^0}^{r_3}(a_2^1,b_3^1) &\ddots&\vdots\\
            \vdots&&\ddots&\ddots&0 \\
            0&&&Q_{a_{\ell-1}^0+\alpha_{\ell-2}}^{r_{\ell-1}}(a_{\ell-1}^1,b_{\ell-1}^1) &Q_{a_\ell^0}^{r_\ell}(a_{\ell-1}^1,b_\ell^1)\\
            Q_{a_1^0+\alpha_1+\dots,+\alpha_{\ell-1}}^{r_1}(a_\ell^1,b_1^1)&0&\cdots &0&Q_{a_\ell^0+\alpha_{\ell-1}}^{r_\ell}(a_\ell^1,b_\ell^1)
        \end{smallmatrix}\right)
    \end{align*}
    is equal to the submatrix (\ref{2lcycle matrix}).
    To obtain a cycle in $H_{[0,s]}^1$ the matrices that are inserted at these positions have to contain $1$s such that they also form such a submatrix. This means that
    \begin{align*}
        \left(\begin{smallmatrix}
            Q_{a_1^1}^{r_1}(a_1^2,b_1^2) & Q_{a_1^1}^{r_2}(a_1^2,b_2^2)&0&\cdots&0\\
            0& Q_{a_2^1}^{r_2}(a_2^2,b_2^2)&\hspace{4mm}Q_{a_2^1}^{r_3}(a_2^2,b_3^2)&\ddots&\vdots\\
            \vdots&&\ddots&\ddots&0\\
            0&&&Q_{a_{\ell-1}^1}^{r_{\ell-1}}(a_{\ell-1}^2,b_{\ell-1}^2)&Q_{a_{\ell-1}^1}^{r_{\ell}}(a_{\ell-1}^2,b_\ell^2)\\
            Q_{a_\ell^1}^{r_1}(a_\ell^2,b_1^2)&0&\cdots&0&Q_{a_\ell^1}^{r_\ell}(a_\ell^2,b_\ell^2)
        \end{smallmatrix}\right)= \left(\begin{smallmatrix}
        1&1&0&\cdots &0\\
        0&1&1&\ddots &\vdots\\
        \vdots & & \ddots&\ddots&0\\
        0&&& 1&1\\
        1&0&\cdots &0&1
    \end{smallmatrix}\right)
    \end{align*}
    for $a_1^2,\dots,a_\ell^2,b_1^2,\dots,b_\ell^2\in\{1,\dots,p\}$.
    By repeating this argument for $H_{[0,s]}^2,\dots,H_{[0,s]}^m$ we get that
    \begin{align*}
        \left(\begin{smallmatrix}
            Q_{a_1^{h-1}}^{r_1}(a_1^h,b_1^h) & Q_{a_1^{h-1}}^{r_2}(a_1^h,b_2^h)&0&\cdots&0\\
            0& Q_{a_2^{h-1}}^{r_2}(a_2^h,b_2^h)&\hspace{4mm}Q_{a_2^{h-1}}^{r_3}(a_2^h,b_3^h)&\ddots&\vdots\\
            \vdots&&\ddots&\ddots&0\\
            0&&&Q_{a_{\ell-1}^{h-1}}^{r_{\ell-1}}(a_{\ell-1}^h,b_{\ell-1}^h)&Q_{a_{\ell-1}^{h-1}}^{r_{\ell}}(a_{\ell-1}^h,b_\ell^h)\\
            Q_{a_\ell^{h-1}}^{r_1}(a_\ell^h,b_1^h)&0&\cdots&0&Q_{a_\ell^{h-1}}^{r_\ell}(a_\ell^h,b_\ell^h)
        \end{smallmatrix}\right)= \left(
    \begin{smallmatrix}
        1&1&0&\cdots &0\\
        0&1&1&\ddots &\vdots\\
        \vdots & & \ddots&\ddots&0\\
        0&&& 1&1\\
        1&0&\cdots &0&1
    \end{smallmatrix}\right)
    \end{align*}
    for $a_1^h,\dots,a_\ell^h,b_1^h,\dots,b_\ell^h\in\{1,\dots,p\}$ with $h=3,\dots,m+1$.

    These conditions yield the equations
    \begin{align}
        Q_{a^0_1}^{r_1}(a_1^1,b_1^1) &= Q_{a^0_j}^{r_j}(a_{j-1}^1,b_j^1) = Q_{a^0_j+\alpha_{j-1}}^{r_j}(a_j^1,b_j^1)= Q_{a_1^0+\sum\limits_{i=1}^{\ell-1}\alpha_i}^{r_1}(a_{\ell}^1,b_1^1)= 1&& \text{and} \label{Q1}\\
        Q_{a_{1}^{h-1}}^{r_1}(a_1^h,b_1^h) &= Q_{a_{j-1}^{h-1}}^{r_j}(a_{j-1}^h,b_j^h) = Q_{a_j^{h-1}}^{r_j}(a_j^h,b_j^h)= Q_{a_{\ell}^{h-1}}^{r_1}(a_{\ell}^h,b_1^h)=1 \label{Q2}
    \end{align}
    for $j=2,\dots,\ell$ and $ \ h = 2,\dots,m+1.$
    
    Since each Latin square has only one entry in each position, it holds that
    $a_\ell^h\neq a_1^h$ and $ a_{j-1}^h\neq a_{j}^h,j=2,\dots,\ell$ for all $h=0,\dots,m+1$.

    The following equations and calculations are all modulo $p$.
    From (\ref{Q1}) we get
    \begin{align}
        a^0_1 &= b_1^1-r_1(a_1^1-1) \label{Q11}\\
        a^0_j &= b_j^1-r_j(a_{j-1}^1-1) && j=2,\dots,\ell\label{Q12}\\
        a^0_j+\alpha_{j-1} &= b_j^1-r_j(a_j^1-1) && j=2,\dots,\ell\label{Q13}\\
        a^0_1 +\sum\limits_{i=1}^{\ell-1}\alpha_i &= b_1^1-r_1(a_{\ell}^1-1)\label{Q14}\\
        (\ref{Q14})-(\ref{Q11}): \sum\limits_{i=1}^{\ell-1}\alpha_i &= r_1(a_1^1-a_{\ell}^1)\label{formel 1}\\
        (\ref{Q13})-(\ref{Q12}): \alpha_{j-1}  &= r_j(a_{j-1}^1-a_j^1) && j=2,\dots,\ell\label{formel 2}\\
        (\ref{formel 1}),(\ref{formel 2})\implies 0&= r_1(a_{\ell}^1-a_{1}^1) +\sum\limits_{i=2}^{\ell} r_i(a_{i-1}^1-a_{i}^1)&& a_0^1:=a_{\ell}^1 \notag\\
        0&= \sum\limits_{i=1}^{\ell}r_i(a_{i-1}^1-a_{i}^1)\label{sum1} .
    \end{align}
    Similarly, one can deduce the following equations from (\ref{Q2}) for all $h=2,\dots,m+1$ and $ j=2,\dots,\ell$.
    \begin{align}
        a_1^{h-1} &= b_1^h-r_1(a_1^h-1) \label{Q21}\\
        a_{j-1}^{h-1} &= b_j^h-r_j(a_{j-1}^h-1)  \label{Q22}\\
        a_{j}^{h-1} &= b_j^h-r_j(a_{j}^h-1) \label{Q23}\\
        a_{\ell}^{h-1} &= b_1^h-r_1(a_{\ell}^h-1) \label{Q24}\\
        (\ref{Q24})-(\ref{Q21}): a_{\ell}^{h-1}-a_1^{h-1} &= r_1(a_1^h-a_{\ell}^h) \label{formel 3}\\
        (\ref{Q23})-(\ref{Q22}): a_j^{h-1}-a_{j-1}^{h-1}&= r_j(a_{j-1}^h-a_j^h) \label{formel 4}\\
        (\ref{formel 3}),(\ref{formel 4})\implies 0 &= \sum\limits_{i=1}^{\ell} r_i(a_{i-1}^h-a_i^h)&& a_0^h:=a_\ell^h \label{sum2}\\
        \intertext{It follows for $g=h,\dots,m+1$ and $h=1,\dots,m+1$ and $j= 1,\dots,\ell$ that}
        (\ref{formel 3}),(\ref{formel 4})\implies a_{j-1}^{h}- a_j^{h} &=(-r_j)^{g-h}(a_{j-1}^{g}- a_j^{g}) \label{formel 5}.
    \end{align}
    By combining these equations we get
    \begin{align}
        (\ref{sum1}),(\ref{sum2}),(\ref{formel 5})\implies 0&= \sum\limits_{i=1}^{\ell} r_i(a_{i-1}^{m+1}-a_i^{m+1})\notag\\
         0&=\sum\limits_{i=1}^\ell r_i(a_{i-1}^m-a_i^m) = - \sum\limits_{i=1}^{\ell} r_i^2(a_{i-1}^{m+1}-a_i^{m+1})\notag\\
         \vdots& \qquad \qquad \vdots \qquad \qquad \vdots\notag\\
          0&=\sum\limits_{i=1}^\ell r_i(a_{i-1}^{m-\ell+3}-a_i^{m-\ell+3}) = (-1)^{\ell-2} \sum\limits_{i=1}^{\ell} r_i^{\ell-1}(a_{i-1}^{m+1}-a_i^{m+1})\notag \\
          \implies 0&=\sum\limits_{i=1}^{\ell} r_i^{L}(a_{i-1}^{m+1}-a_i^{m+1}) \notag
    \end{align}
    for $L=1,\dots,\ell-1$ where $m-\ell+3\geq 1$ because $2\ell\leq 2(m+2)$.
    By renaming the variables through $c_j := a_j^{m+1}$ for $ j=0,\dots,\ell$ we get the system of equations
    \begin{align}
        \begin{pmatrix}
            1& 1& 1& \cdots &1\\
            r_1 & r_2& r_3& \cdots &r_{\ell}\\
            r_1^2 & r_2^2& r_3^2& \cdots& r_{\ell}^2\\
            \vdots &\vdots &\vdots &\ddots &\vdots\\
            r_1^{\ell-1} & r_2^{\ell-1}& r_3^{\ell-1}& \cdots& r_{\ell}^{\ell-1}\\
        \end{pmatrix} 
        \begin{pmatrix}
            c_{\ell}-c_1\\ c_1-c_2\\ c_2-c_3\\ \vdots \\ c_{\ell-1}-c_{\ell}
        \end{pmatrix} =V(r_1,r_2,\dots,r_{\ell})^T\cdot \begin{pmatrix}
            c_{\ell}-c_1\\ c_1-c_2\\ c_2-c_3\\ \vdots \\ c_{\ell-1}-c_{\ell}
        \end{pmatrix} =0\label{Gleichungssystem matrix}
    \end{align}
    where $V(r_1,r_2,\dots,r_{\ell})$ is a Vandermonde matrix. If $r_i\neq r_j$ for all $i,j= 1,\dots,\ell,i\neq j$, then $\det V(r_1,\dots,r_{\ell})= \prod_{1\leq i<j\leq \ell}(r_j-r_i)\neq 0$ 
    which implies $\left(\begin{smallmatrix} c_{\ell}-c_1\\ c_1-c_2\\ c_2-c_3\\ \vdots \\ c_{\ell-1}-c_{\ell} 
    \end{smallmatrix}\right)=0$.
        This cannot be true since $c_{i-1}\neq c_i$ for all $i=1,\dots,\ell$.

To understand what is happening if we assume that the statement of the lemma is not true, i.e. assuming that there exists at least one $i\in\{1,\hdots,\ell\}$ such that $r_i\neq r_j$ for all $j\in\{1,\dots,\ell\}\setminus\{i\}$, consider first the example $\ell=4$, i.e.,
   \begin{align}
        \begin{pmatrix}
            1& 1& 1& 1\\
            r_1 & r_2& r_3& r_4\\
            r_1^2 & r_2^2& r_3^2& r_4^2\\
            r_1^3 & r_2^3& r_3^3& r_4^3
        \end{pmatrix} 
        \begin{pmatrix}
            c_{4}-c_1\\ c_1-c_2\\ c_2-c_3 \\c_3-c_4
        \end{pmatrix} =0
    \end{align} with $r_1\neq r_2\neq r_3\neq r_1$ and $r_2=r_4$ , which can be reduced to \begin{align}
        \begin{pmatrix}
            1& 1& 1 \\
            r_1 & r_2& r_3\\
            r_1^2&r_2^2&r_3^2
        \end{pmatrix} 
        \begin{pmatrix}
            c_4-c_1\\ c_1-c_2+(c_3-c_4)\\ c_2-c_3
        \end{pmatrix} =0
    \end{align}
implying $c_1=c_4$ and $c_2=c_3$, which is a contradiction.

In general, if there exists at least one $i\in\{1,\hdots,\ell\}$ such that $r_i\neq r_j$ for all $j\in\{1,\dots,\ell\}\setminus\{i\}$,
then \Cref{Gleichungssystem matrix} can be reduced to a smaller homogeneous system of equations with a coefficient matrix where we only keep a subset of the columns of the Vandermonde matrix corresponding to a subset of $\{r_1,\hdots,r_{\ell}\}$ with pairwise distinct elements. If we have $N$ such pairwise distinct elements, we know $N\geq 2$ and the first $N$ rows of this reduced coefficient matrix form a Vandermonde matrix with nonzero determinant, where the column of this reduced matrix corresponding to $r_i$ is still multiplied by $c_{i-1}-c_i$.   
    Hence, it follows that $c_{i-1}=c_i$, which is a contradiction. Therefore, for each $i\in\{1,\dots,\ell\}$, there is $j\in\{1,\dots,\ell\}\setminus\{i\}$ such that $r_i=r_j$, which proves the lemma.
\end{proof}

    \begin{theorem}\label{Theo girth 8, no 10c}
        The parity-check matrix $H_{[0,s]}^1$ has no cycles of length $6$ and there are no cycles of length $10$ in $H_{[0,s]}^3$. More generally, every parity check matrix $H_{[0,s]}^m$ with $m\geq 1$ has girth at least $8$ and there are no $10$ cycles in $H_{[0,s]}^m$ for $m\geq 3$.
    \end{theorem}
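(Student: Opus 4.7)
The plan is to reduce each claim to \Cref{lemma r pw gleich} combined with a short combinatorial case analysis of the multiset $\{r_1,\ldots,r_\ell\}$ attached to a hypothetical $2\ell$-cycle. The range hypothesis $2\ell\leq 2(m+2)$ of that lemma translates to $m\geq \ell-2$, which matches the hypotheses of the theorem exactly: ruling out $4$-cycles needs $m\geq 0$, $6$-cycles need $m\geq 1$, and $10$-cycles need $m\geq 3$, so the lemma is applicable in every case we need.

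Assume a $2\ell$-cycle exists in $H_{[0,s]}^m$. The discussion preceding \Cref{lemma r pw gleich} provides values $r_1,\ldots,r_\ell\in\{1,\ldots,p-1\}$ satisfying the cyclic non-adjacency condition $r_j\neq r_{j+1}$ for $j=1,\ldots,\ell-1$ and $r_1\neq r_\ell$; the lemma then forces every $r_i$ to coincide with at least one other $r_j$, so the multiset $\{r_1,\ldots,r_\ell\}$ contains no element of multiplicity one. For $\ell=2$ this already contradicts $r_1\neq r_2$; for $\ell=3$ the cyclic non-adjacency forces $r_1,r_2,r_3$ to be pairwise distinct, so each is again unique. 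These two cases together establish girth at least $8$ for every $m\geq 1$.

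For $\ell=5$, the multiset has size $5$ with each occurring value of multiplicity at least $2$, so the multiplicity partition of $5$ is either $5$ or $3+2$. The partition $5$ gives $r_1=r_2$, contradicting cyclic non-adjacency. In the partition $3+2$, two distinct values $a,b$ occur $3$ and $2$ times along the cyclic sequence $r_1,\ldots,r_5$; but three equal entries on a cyclic sequence of length $5$ cannot be pairwise non-adjacent, since placing three $a$'s creates three arcs whose total length is only $2$, so by pigeonhole at least one arc is empty and two $a$'s sit adjacent, violating $r_j\neq r_{j+1}$. This rules out $10$-cycles whenever $m\geq 3$, completing the proof.

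The substantive work has already been packaged inside \Cref{lemma r pw gleich} via its Vandermonde-matrix argument, so the main remaining task is purely combinatorial. The only step requiring a bit of care is the partition analysis for $\ell=5$, but even this reduces to a one-line pigeonhole argument, so I do not anticipate a serious obstacle beyond correctly matching the lemma's range hypothesis to each claimed girth.
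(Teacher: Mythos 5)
Your proposal is correct and follows essentially the same route as the paper: both reduce every case to \Cref{lemma r pw gleich} (whose hypothesis $2\ell\leq 2(m+2)$ gives exactly the stated ranges of $m$) and then derive a contradiction with the cyclic non-adjacency of $r_1,\dots,r_\ell$. The only cosmetic differences are that the paper rules out $4$-cycles by citing \Cref{theo H0 girth 6} rather than applying the lemma with $\ell=2$, and it settles $\ell=5$ by a direct case chase (WLOG $r_1=r_3$, forcing $r_4=r_2$ and leaving no partner for $r_5$) instead of your multiplicity-partition pigeonhole argument.
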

    \begin{proof}
        It follows from \Cref{theo H0 girth 6} that $H_{[0,s]}^m$ has girth at least $6$ for all $m\geq 0$.
        
        If there was a cycle of length $6$ in $H_{[0,s]}^m$ for $m\geq 1$, then it follows from \Cref{lemma r pw gleich} that there would be $r_1,r_2,r_3$ with $r_1\neq r_2\neq r_3\neq r_1$ such that for
        or each $i\in\{1,2,3\}$, there is $j\in\{1,2,3\}\setminus\{i\}$ such that $r_i=r_j$, which is not possible. So, there is no cycle of length $6$.
    
        If there was a cycle of length $10$ in $H_{[0,s]}^m$ for $m\geq 3$, then it follows from \Cref{lemma r pw gleich} that there would be $r_1,r_2,r_3,r_4,r_5$ with $r_1\neq r_2\neq r_3\neq r_4\neq r_5\neq r_1$ such that 
        for
        or each $i\in\{1,\hdots,5\}$, there is $j\in\{1,\hdots,5\}\setminus\{i\}$ such that $r_i=r_j$.
        Thus, $r_1$ has to be equal to $r_3$ or $r_4$, but cannot be equal to both. For reasons of symmetry, we can assume that $r_1=r_3$. Then, since $r_4$ is different from $r_5$ and $r_3$, and hence also different from $r_1$, one obtains that $r_4$ must be equal to $r_2$. Since $r_5$ is different from $r_4$ and $r_1$, it cannot be equal to $r_2=r_4$ or $r_3=r_1$. Thus, it is not possible for each of the $r_i,i=1,2,3,4,5$ to be equal to one of the others and there is no cycle of length $10$.
    \end{proof}
    
\subsection{Construction of LDPC convolutional codes with girth 10 and 12}\label{subsec girth 10,12}

    We now further modify the construction of Subsection \ref{subsec girth 8} to eliminate $8$-cycles. This results in codes with girths $10$ and $12$.
    \begin{definition}
        Let $\Tilde{H}_{[0,s]}^m$ be the matrix that is constructed from $H_{[0,s]}^m$ by replacing each entry with a matrix of size $p\times p$ as follows: each $0$ is replaced with $0_{p\times p}$, each $1$ in $I_p$ is replaced with $I_p$ and each $1$ in $Q_i^r(a,b)$ is replaced with $Q_{rab\mod p}^r$.
        The convolutional code derived from $\Tilde{H}_{[0,s]}^m$ is called $\Tilde{\mathcal{C}}^m$ and is a binary, time-varying $(2p^{m+2},p^{m+2})$ convolutional code with period $p-1$.
    \end{definition}
    \begin{theorem}\label{Theo girth 10,12}
        The convolutional code $\Tilde{\mathcal{C}}^m$ has girth at least $10$ for all $m\geq 2$ and girth at least $12$ for all $m\geq 3$.
    \end{theorem}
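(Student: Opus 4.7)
The plan is to reduce both assertions to a single new fact: $\Tilde{H}_{[0,s]}^m$ has no $8$-cycles for $m\ge 2$. Since $\Tilde{H}_{[0,s]}^m$ is obtained from $H_{[0,s]}^m$ by replacing each $1$ with a $p\times p$ permutation block $Q_{rab\bmod p}^r$, the same level-by-level unfolding used in the proof of \Cref{lemma r pw gleich} turns any $2\ell$-cycle in $\Tilde{H}_{[0,s]}^m$ into a $2\ell$-cycle in $H_{[0,s]}^m$ (each block being a permutation matrix prevents column collapse). Combined with \Cref{Theo girth 8, no 10c}, this already gives girth $\ge 8$ for $m\ge 1$ and rules out $10$-cycles for $m\ge 3$. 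Hence both statements follow once we eliminate $8$-cycles in $\Tilde{H}_{[0,s]}^m$ for $m\ge 2$.

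So suppose, for contradiction, that $\Tilde{H}_{[0,s]}^m$ has an $8$-cycle for some $m\ge 2$. Then $H_{[0,s]}^m$ also contains an $8$-cycle, and since $2\ell = 8 \le 2(m+2)$ for $m\ge 2$, \Cref{lemma r pw gleich} applies: each $r_i$, $i\in\{1,2,3,4\}$, has to coincide with some $r_j$, $j\ne i$. Together with $r_1\ne r_2\ne r_3\ne r_4\ne r_1$, the only possibility is $r_1 = r_3 =: r$, $r_2 = r_4 =: r'$ with $r\ne r'$. Writing $d_i := a_{i-1}^{m+1} - a_i^{m+1}$ cyclically (with $a_0 := a_\ell$), equation \eqref{sum2} at level $h=m+1$ collapses to $(r-r')(d_1 + d_3)\equiv 0 \pmod p$, forcing $d_3 = -d_1$ and hence $d_4 = -d_2$. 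All four $d_i$ are nonzero modulo $p$ by the distinctness conditions $a_{i-1}^{m+1}\ne a_i^{m+1}$ and $a_\ell^{m+1}\ne a_1^{m+1}$ from \Cref{lemma r pw gleich}.

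The new ingredient is the $\Tilde{\ }$-lift at level $m+2$. The block-membership conditions for the diagonal block $Q_{r_i a_i^{m+1} b_i^{m+1}}^{r_i}$ and the off-diagonal block $Q_{r_i a_{i-1}^{m+1} b_i^{m+1}}^{r_i}$ of the cycle submatrix read
\[
b_i^{m+2} - r_i(a_i^{m+2}-1) \equiv r_i a_i^{m+1} b_i^{m+1} \pmod p,
\]
and analogously with $a_{i-1}^{m+2}, a_{i-1}^{m+1}$. Subtracting and dividing by $r_i$ yields $a_{i-1}^{m+2} - a_i^{m+2} \equiv -b_i^{m+1} d_i \pmod p$. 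Summing over $i = 1,\ldots,4$ makes the left-hand side telescope to $0$, giving $\sum_i b_i^{m+1} d_i \equiv 0 \pmod p$, which with $d_3 = -d_1$ and $d_4 = -d_2$ becomes $(b_1^{m+1}-b_3^{m+1})d_1 + (b_2^{m+1}-b_4^{m+1})d_2 \equiv 0 \pmod p$. Substituting $b_i^{m+1} = a_i^m + r_i a_i^{m+1} - r_i$ (from the level-$(m+1)$ relation analogous to \eqref{Q23}) and using the values of $a_i^m - a_j^m$ determined by \eqref{formel 4}, one computes $b_1^{m+1} - b_3^{m+1} \equiv (r-r')d_2$ and $b_2^{m+1} - b_4^{m+1} \equiv (r-r')d_1$. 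The identity therefore reduces to $2(r-r') d_1 d_2 \equiv 0 \pmod p$. Since $p$ is an odd prime, $r\ne r'$, and $d_1, d_2\not\equiv 0 \pmod p$, this is the desired contradiction.

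The main obstacle will be the three-level bookkeeping in the last step: coordinating the standard-lift identities at levels up to $m+1$ with the non-standard $\Tilde{\ }$-lift equation at level $m+2$, and verifying that the cross-terms introduced by the multiplier $r_i$ in the new rule $Q_i^r(a,b) \mapsto Q_{rab}^r$ cancel just enough to leave the clean factor $2(r-r')d_1 d_2$. The multiplication by $r_i$ in the new lift is precisely what inserts the factor $(r-r')$ after invoking \eqref{formel 4}; without it, the coefficients $b_i^{m+1} - b_{i+2}^{m+1}$ would vanish identically and no contradiction could be extracted.
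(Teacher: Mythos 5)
Your proposal is correct and takes essentially the same approach as the paper: reduce both claims to ruling out $8$-cycles in $\Tilde{H}_{[0,s]}^m$ for $m\geq 2$ by projecting down to $H_{[0,s]}^m$, invoke Lemma~\ref{lemma r pw gleich} to force $r_1=r_3$ and $r_2=r_4$, and then combine the collapsed Vandermonde relation, the level-$(m+1)$ incidence equations, and the summed conditions coming from the lift $Q_i^r(a,b)\mapsto Q_{rab \bmod p}^r$ to obtain a contradiction modulo the odd prime $p$. Your final identity $2(r-r')d_1d_2\equiv 0 \pmod p$ is, up to the change of variables $d_i=c_{i-1}-c_i$, exactly the paper's factorization $2(r_1-r_2)(c_2-c_3)(c_1-c_2)\equiv 0 \pmod p$, so the two arguments differ only in bookkeeping.
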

    \begin{proof}
        Consider $\Tilde{H}_{[0,s]}^m$ with $m\geq 2$. Since $H_{[0,s]}^m$ has no cycles of length $4$ or $6$, the girth of $\Tilde{H}_{[0,s]}^m$ is at least $8$. Assume that there is a cycle of length $8$ in $\Tilde{H}_{[0,s]}^m$. Then, there is also a cycle in $H_{[0,s]}^m$ corresponding to the same $r_1,r_2,r_3,r_4\in\{1,\dots,p-1\}$ with $r_1\neq r_2\neq r_3\neq r_4\neq r_1$. We know from \Cref{lemma r pw gleich} that none of these $r_j,j=1,2,3,4$ can be distinct from all of the others and hence, $r_1=r_3$ and $r_2=r_4$. The following equations and calculations are again all modulo $p$. 
        We get from \Cref{Gleichungssystem matrix} that
        \begin{align}
            \begin{pmatrix}
                1&1\\
                r_1 &r_2
            \end{pmatrix}\begin{pmatrix}
                c_4-c_1+c_2-c_3 \\ c_1-c_2+c_3-c_4
            \end{pmatrix}=0\notag\\
            \implies c_1-c_2+c_3-c_4=0\label{girth10 c}
        \end{align}
        and from \Cref{Q21}-(\ref{Q24}) it follows for $h=m+1$ and $d_j:=b_j^{m+1},j=1,\dots,4$ that\\
        \begin{minipage}{0.44\textwidth}
            \begin{align}
            a_1^m &= d_1-r_1(c_1-1)\label{g10 1}\\
            a_1^m &= d_2-r_2(c_1-1)\label{g10 2}\\
            a_2^m &= d_2-r_2(c_2-1)\label{g10 3}\\
            a_2^m &= d_3-r_3(c_2-1)\label{g10 4}
            \end{align}
        \end{minipage}
        \begin{minipage}{0.44\textwidth}
            \begin{align}
            a_3^m &= d_3-r_3(c_3-1)\label{g10 5}\\
            a_3^m &= d_4-r_4(c_3-1)\label{g10 6}\\
            a_4^m &= d_4-r_4(c_4-1)\label{g10 7}\\
            a_4^m &= d_1-r_1(c_4-1)\label{g10 8}
            \end{align}
        \end{minipage}
        \begin{align}
            (\ref{g10 2})-(\ref{g10 1})+d_1-d_2: d_1-d_2 &= (c_1-1)(r_1-r_2)\label{g10 9}\\
            (\ref{g10 4})-(\ref{g10 3})+d_2-d_3: d_2-d_3 &= (c_2-1)(r_2-r_3)\label{g10 10}\\
            (\ref{g10 6})-(\ref{g10 5})+d_3-d_4: d_3-d_4 &= (c_3-1)(r_3-r_4)\label{g10 11}\\
            (\ref{g10 8})-(\ref{g10 7})+d_4-d_1: d_4-d_1 &= (c_4-1)(r_4-r_1)\label{g10 12}
        \end{align}
        In addition, the following has to hold for $e_1,\dots,e_4,f_1,\dots,f_4\in\{1,\dots,p\}$.
        \begin{align*}
            Q_{r_1 c_1 d_1}^{r_1}(e_1,f_1) &= Q_{r_2 c_1 d_2}^{r_2}(e_1,f_2) = Q_{r_2 c_2 d_2}^{r_2}(e_2,f_2) = Q_{r_3 c_2 d_3}^{r_3}(e_2,f_3) = 1,\\
            Q_{r_3 c_3 d_3}^{r_3}(e_3,f_3) &= Q_{r_4 c_3 d_4}^{r_4}(e_3,f_4) = Q_{r_4 c_4 d_4}^{r_4}(e_4,f_4) = Q_{r_1 c_4 d_1}^{r_1}(e_4,f_1) = 1,
        \end{align*}
        
        which implies\\
        \begin{minipage}{0.44\textwidth}
            \begin{align}
            r_1c_1d_1&= f_1-r_1(e_1-1)\label{girth10 1}\\
            r_2c_1d_2&= f_2-r_2(e_1-1)\label{girth10 2}\\
            r_2c_2d_2&= f_2-r_2(e_2-1)\label{girth10 3}\\
            r_3c_2d_3&= f_3-r_3(e_2-1)\label{girth10 4}
            \end{align}
        \end{minipage}
        \begin{minipage}{0.44\textwidth}
            \begin{align}
            r_3c_3d_3&= f_3-r_3(e_3-1)\label{girth10 5}\\
            r_4c_3d_4&= f_4-r_4(e_3-1)\label{girth10 6}\\
            r_4c_4d_4&= f_4-r_4(e_4-1)\label{girth10 7}\\
            r_1c_4d_1&= f_1-r_1(e_4-1)\label{girth10 8}
            \end{align}
        \end{minipage}

        \begin{align}
            \frac{1}{r_1}\cdot ((\ref{girth10 1})-(\ref{girth10 8})): e_4-e_1=& d_1(c_1-c_4)  \label{girth10 9}\\
            \frac{1}{r_2}\cdot ((\ref{girth10 3})-(\ref{girth10 2})): e_1-e_2 =& d_2(c_2-c_1)\label{girth10 10}\\
            \frac{1}{r_3}\cdot ((\ref{girth10 5})-(\ref{girth10 4})): e_2-e_3=& d_3(c_3-c_2) \label{girth10 11}\\
            \frac{1}{r_4}\cdot ((\ref{girth10 7})-(\ref{girth10 6})): e_3-e_4 =& d_4(c_4-c_3)\label{girth10 12}
        \end{align}
        \begin{align}
            (\ref{girth10 9})+(\ref{girth10 10})+(\ref{girth10 11})+(\ref{girth10 12}): 0=&c_1(d_1-d_2)+c_2(d_2-d_3)+c_3(d_3-d_4)+c_4(d_4-d_1) \label{girth10 f1}\\
            (\ref{g10 9}),(\ref{g10 10}),(\ref{g10 11}),(\ref{g10 12}) \rightarrow (\ref{girth10 f1}):  0=&c_1^2(r_1-r_2)+c_2^2(r_2-r_3)+c_3^2(r_3-r_4)+c_4^2(r_4-r_1) \notag\\
            & -(c_1(r_1-r_2)+c_2(r_2-r_3)+c_3(r_3-r_4)+c_4(r_4-r_1)\notag\\
            =&(r_1-r_2)(c_1^2-c_2^2+c_3^2-c_4^2-(c_1-c_2+c_3-c_4))\notag\\
            (\ref{girth10 c}) \implies 0=&(r_1-r_2)(c_1^2-c_2^2+c_3^2-(c_1-c_2+c_3)^2)\notag\\
            =&2(r_1-r_2)(c_2-c_3)(c_1-c_2)\notag .
        \end{align}
As $r_1\neq r_2$, this implies either $c_2=c_3$ or $c_1=c_2$, both of which is not possible and hence, there are no cycles of length $8$ in $\Tilde{H}_{[0,s]}^m$ for $m\geq 2$. Since, according to Theorem \ref{Theo girth 8, no 10c}, there are no cycles of length $10$ in $H_{[0,s]}^m$ for $m\geq 3$, there are also no cycles of length $10$ in $\Tilde{H}_{[0,s]}^m$ for $m\geq 3$.
    \end{proof}
    
\subsection{Density and distance properties of the constructed codes}\label{subsec other prop}

In this subsection, we evaluate the LDPC codes constructed in the previous subsections with respect to other relevant properties besides the girth of the associated Tanner graph, namely with respect to the density of their parity-check matrices and their free distance and column distances.
\begin{remark}
    By easy calculations, one obtains that the density of $H_{[0,s]}^m$ is equal to $$\frac{(s+1)p^{m+1}(\mu+2)}{2(s+1)p^{2(m+1)}(\mu+s+1)}=\frac{\mu+2}{2p^{m+1}(\mu+s+1)}=o(n).$$ If $\mu$ assumes the maximum possible value $p-2$, this density is equal to $\frac{1}{2p^m(p+s-1)}$.
    The density of $\Tilde{H}_{[0,s]}^m$ is the same as the density of $H_{[0,s]}^{m+1}$.
\end{remark}

    \begin{theorem}\label{Theo dist}
        The codes $\mathcal{C}^m$ and $\Tilde{\mathcal{C}}^m$ with sliding parity-check matrices $H_{[0,s]}^m$ and $\Tilde{H}_{[0,s]}^{m}$, respectively, have the following distance properties for all $m\in \mathbb{N}_0$:
        \begin{enumerate}[label=(\roman*)]
            \item $d_j^c=\min\{j,\mu\}+2$
            \item $d_{\text{free}}=\mu+2$.
        \end{enumerate}
    \end{theorem}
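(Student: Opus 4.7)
The plan is to establish matching upper and lower bounds on $d_j^c$ by a uniform argument that applies to both $\mathcal{C}^m$ and $\Tilde{\mathcal{C}}^m$ for every $m\ge 0$, and then to deduce (ii) from (i). The key structural features to exploit are: every column of the sliding parity-check matrix is of one of two types --- an \emph{I-column}, coming from an $I_p$ block and of Hamming weight $1$; or a \emph{Q-column}, coming from the stacked sequence $Q_1^r,\ldots,Q_{\mu+1}^r$ and of weight $\min(j,\mu)+1$ in $H_j^c$ --- with block-row support $\{\ell\}$ and $\{\ell,\ell+1,\ldots,\min(\ell+\mu,j)\}$ respectively, for a column in block column $\ell$. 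Both the column classification and the weights are preserved under each lifting operation, since every $1$ is replaced by a permutation matrix. Moreover, by \Cref{theo H0 girth 6}, \Cref{Theo girth 8, no 10c} and \Cref{Theo girth 10,12}, every sliding parity-check matrix in the family has girth at least $6$.

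For the upper bounds I exhibit explicit codewords. Fix a time $t$, set $r=(t\bmod(p-1))+1$, choose $k\in\{1,\ldots,p\}$, and (for $H_{[0,s]}^0$) take $v_0=(e_k,\,Q_1^r e_k^\top)$ together with $v_i=(0,\,Q_{i+1}^r e_k^\top)$ for $i=1,\ldots,\min(j,\mu)$, and $v_i=0$ for $i>\min(j,\mu)$. A direct substitution, using that the I-block of $H_i(\cdot)$ vanishes for $i\ge 1$, shows that this tuple lies in $\ker H_j^c(t)$ and has weight $\min(j,\mu)+2$, so $d_j^c\le \min(j,\mu)+2$. Specializing to $j=\mu$, the polynomial $v(z)=\sum_{i=0}^\mu v_i z^i$ is an honest codeword of degree $\mu$ and weight $\mu+2$ --- all block rows $\mu+1,\ldots,2\mu$ vanish because the Q-parts of $v_1,\ldots,v_\mu$ are zero --- giving $d_{\mathrm{free}}\le \mu+2$. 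The identical recipe, with the lifted permutation matrices in place of $Q_i^r$, applies verbatim to $\Tilde{\mathcal{C}}^m$ for every $m\ge 0$.

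For the lower bound on $d_j^c$, fix any $t$ and any truncated codeword $(v_0,\ldots,v_j)\in\ker H_j^c(t)$ with $v_0\neq 0$. I first argue that the support of $v_0$ must contain a Q-column of block column $0$: no other block column contributes to block row $0$, and in block row $0$ the I-columns of block column $0$ act as the identity, so if the Q-part of $v_0$ were zero, the block-row-$0$ equation would force the I-part of $v_0$ to vanish as well, contradicting $v_0\neq 0$. Let $c$ be such a Q-column in the support $V$ of $(v_0,\ldots,v_j)$. In the Tanner subgraph induced by $V$, each of the $\min(j,\mu)+1$ check-node neighbours of $c$ has even degree $\ge 2$ and is therefore adjacent to at least one further variable node in $V$. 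These further nodes are pairwise distinct: if two distinct check-neighbours of $c$ shared a second common variable node, they would together with $c$ form a $4$-cycle, contradicting girth $\ge 6$. Hence $|V|\ge 1+(\min(j,\mu)+1)=\min(j,\mu)+2$, and since $t$ was arbitrary, \Cref{Th djc Hjc time} gives $d_j^c\ge \min(j,\mu)+2$, proving (i).

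For (ii), the upper bound $d_{\mathrm{free}}\le \mu+2$ has been established, so it remains to prove $d_{\mathrm{free}}\ge \mu+2$. Given any nonzero codeword $v(z)$ with first and last nonzero coefficients $v_{s_0}$ and $v_{s_1}$, a ``staircase'' argument on the equations corresponding to block rows $s_1+\mu,s_1+\mu-1,\ldots$ of $H_{[0,s_1]}v^\top=0$ uses the fact that $H_\mu(l)$ has an invertible Q-block and a zero I-block to inductively force the Q-parts $u_l$ of $v_l$ to vanish for $l=s_1,s_1-1,\ldots,s_1-\mu+1$. If the span $s_1-s_0$ were strictly less than $\mu$, this would in particular give $u_{s_0}=0$, and then block row $s_0$ would force $w_{s_0}=0$, contradicting $v_{s_0}\neq 0$. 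Therefore $s_1-s_0\ge \mu$, and the truncation $(v_{s_0},\ldots,v_{s_0+\mu})\in\ker H_\mu^c(s_0)$ with $v_{s_0}\neq 0$ has weight at least $d_\mu^c=\mu+2$ by (i), bounding $\mathrm{wt}(v(z))$ from below by $\mu+2$. The main subtle step is the Q-column girth-counting in the lower bound on $d_j^c$; the advantage of formulating it purely in terms of the I/Q dichotomy and the girth-$\ge 6$ property is that both are preserved by every lifting, so the proof is uniform in $m$ and across both code families.
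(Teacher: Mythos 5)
Your proposal is correct and takes essentially the same route as the paper: the lower bound rests on the identical counting argument (a Q-column of weight $\min\{j,\mu\}+1$ requires that many distinct cancelling columns because girth $\geq 6$ forbids two columns from sharing two ones), and the upper bound uses the same explicit witness (a Q-column cancelled by its matching I-column plus $\min\{j,\mu\}$ further I-columns), which you merely phrase as a codeword rather than as a linear dependence among the columns of $H_j^c(t)$ as in Theorem~\ref{Th djc Hjc time}. The remaining differences are cosmetic: your Tanner-graph degree-counting replaces the paper's column-span formulation, and your staircase argument for $d_{\mathrm{free}}\geq \mu+2$ is a hands-on substitute for the paper's one-line appeal to $d_{\mathrm{free}}\geq d_j^c$ (in fact it supplies the truncation details that appeal implicitly relies on).
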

  
    \begin{proof}
        We prove the statement only for $\mathcal{C}^m$ as it is almost identical for $\Tilde{\mathcal{C}}^{m}$.
        
        Due to Theorem \ref{Th djc Hjc time}, the column distance $d^c_j$ is equal to $d$ if, for all $t\in\mathbb N_0$, none of the first $2p^{m+1}$ columns of $H^c_j(t)$ is contained in the span of any other $d-2$ columns and if there exists $t\in\mathbb N_0$ such that one of the first $2p^{m+1}$ columns of $H_j^c(t)$ is in the span of some other $d-1$ columns of that matrix.
    
        Since there are no $4$-cycles in $H_{[0,s]}^m$ for all $m\in \mathbb{N}_0$, two different columns have at most one $1$ in common.
            For all $t\in\mathbb{N}_0$, each of the first $p^{m+1}$ columns of $H_j^c(t)$ contains $\min\{j+1,\mu+1\}$ entries equal to $1$.
            Therefore, one has to add at least $\min\{j+1,\mu+1\}$ other columns in order to achieve that the sum of the columns is zero.
        Each of the first $2p^{m+1}$ columns that is not part of the first $p^{m+1}$ columns contains exactly one entry equal to $1$, which can only be turned into a $0$ by adding one of the first $p^{m+1}$ columns. The sum of two such vectors contains at least $\min\{j,\mu\}$ entries equal to $1$, so there are at least $\min\{j,\mu\}$ other columns needed so that the sum of them is zero. 

        For each of the first $2p^{m+1}$ columns of $H_j^c(t)$, there exists exactly one other column in the first $2p^{m+1}$ which has a $1$ at the same position. One of these columns contains exactly one $1$ and the other $\min\{j+1,\mu+1\}$ many $1$s. The sum of two such columns contains $\min\{j,\mu\}$ nonzero entries. We can now choose columns that contain exactly one $1$ of the identity matrix and add them to obtain zero. We find that each of the first $2p^{m+1}$ columns is contained in the span of $\min\{j+1,\mu+1\}$ other columns. Hence, $d_j^c=\min\{j+1,\mu+1\}+1$.
    
        Since $d_\text{free}\geq d_j^c$ for all $j\in\mathbb{N}_0$, it holds that $d_\text{free}\geq \mu+2$. Similarly to the calculation of $d_j^c$, one obtains that the free distance cannot be larger than $\mu+2$, which proves the theorem.
    \end{proof}

It is important to note that when $p$ is increasing, the density of the parity-check matrix is decreasing and the free and column distances are increasing. Hence, codes with larger values of $p$, which of course also implies larger parity-check matrices, have a better performance in terms of efficiency and error-correction.

\section{Construction of time-invariant binary LDPC convolutional codes with large girth}

In this section, we use construction ideas similar to those in the previous section to obtain time-invariant LDPC convolutional codes with large girth.

Since the parity-check matrices $H_{[0,s]}^m$ and $\Tilde{H}_{[0,s]}^m$ from the previous section always have period $p-1$, we can obtain a time-invariant convolutional code with sliding parity-check matrix $\hat{H}$ in the following way:
\begin{align*}
    \hat H_0 &:=\begin{pmatrix}
        H_0(0)\\
        H_1(0) & H_0(1)\\
        \vdots & \vdots & \ddots \\
        H_{p-2}(0) & H_{p-3}(1) & \dots & H_0(p-2)
    \end{pmatrix},\qquad
    \hat H_1 := \begin{pmatrix}
        0&H_{p-2}(1) & \dots &H_1(p-2)\\
        &0&\ddots &\vdots\\
        &&\ddots&H_{p-2}(p-2)\\
        &&&0
    \end{pmatrix},\\
    \hat H &:= \begin{pmatrix}
        \hat H_0\\
        \hat H_1 &\hat H_0\\
        &\ddots&\ddots\\
        &&\hat H_1&\hat H_0\\
        &&&\hat H_1
    \end{pmatrix} = \begin{pmatrix}
        H_0(0)\\
            H_1(0)& H_0(1)\\
            \vdots& H_1(1)& \ddots &\\
            H_\mu (0) &\vdots &\ddots &H_0(s)\\
            & H_\mu(1) & & H_1(s)\\
            &&\ddots& \vdots\\
            &&& H_\mu(s)
    \end{pmatrix}.
\end{align*}
The code $\hat{\mathcal{C}}$ constructed from this parity-check matrix $\hat H$ has the same girth as the time-varying code from which it is derived.
These codes have the disadvantage that the memory is always equal to $1$.

We are now going to construct a new code which is time invariant, has no cycles of length $4$ and can have memory larger than $1$.
For this we define the modified incidence matrices for the Latin squares $L_1,\dots,L_{p-1}$ from \Cref{subsec LS} in the following way:
\begin{align}
    \Tilde{Q}_i^r(a,b) := \left\{\begin{array}{ll}1 & \text{if } L_r(a+1,b+1)=i \\ 0 & \, \text{otherwise} \\ \end{array} \right. .\notag
\end{align}
Note that $\Tilde{Q}_i^r\in \mathbb{F}_2^{(p-1)\times(p-1)}$ is obtained from the incidence matrix $Q_i^r$ by deleting the first column and the first row.
We now define the sliding parity-check matrix 
\begin{align}
    H^\prime := \begin{pmatrix}
        \Tilde{Q}_1^1&I_{p-1}\\
        \Tilde{Q}_1^2&0 & \Tilde{Q}_1^1&I_{p-1}\\
        \vdots&\vdots & \Tilde{Q}_1^2&0&\ddots\\
        \Tilde{Q}_1^\mu&0 & \vdots&\vdots &\ddots &\Tilde{Q}_1^1&I_{p-1}\\
        \Tilde{Q}_1^{\mu+1}&0 & \Tilde{Q}_1^{\mu}&0 & &\Tilde{Q}_1^2&0\\
        && \Tilde{Q}_1^{\mu+1}&0&\ddots &\vdots&\vdots \\
        &&&&\ddots&\Tilde{Q}_1^\mu&0\\
        &&&&&\Tilde{Q}_1^{\mu+1}&0
    \end{pmatrix}\notag
\end{align}
with $\mu\leq p-2$. 

\begin{theorem}
    The binary, time-invariant $(2(p-1),p-1)$ convolutional code $\mathcal{C}^\prime$ derived from $H^\prime$ has girth at least $6$.
\end{theorem}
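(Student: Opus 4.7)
The plan is to mirror the structure of the proof of Theorem \ref{theo H0 girth 6}: assume a 4-cycle exists in $H^\prime$, realize it as a $2\times 2$ all-ones submatrix, and extract a modular contradiction from the algebraic structure of the trimmed permutation matrices $\Tilde{Q}_1^r$.

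First I would dispose of the trivial cases. The columns of $H^\prime$ fall into two kinds: those living in an $I_{p-1}$ block column, which contain exactly one $1$ each, and those living in a $\Tilde{Q}$-type block column. An $I$-column cannot contribute to a $2\times 2$ all-ones submatrix, so both columns of the hypothetical 4-cycle must come from $\Tilde{Q}$-type block columns. I would then observe that each $\Tilde{Q}_1^r$ is in fact a $(p-1)\times(p-1)$ permutation matrix: the unique $1$ in the first row of $Q_1^r$ and the unique $1$ in its first column both sit at position $(1,1)$, so removing the first row and the first column loses exactly one $1$ and preserves exactly one $1$ per row and per column. It follows that two distinct columns inside the same $\Tilde{Q}$-type block column have their $1$'s in disjoint sets of rows, and therefore the two columns of the cycle must lie in different $\Tilde{Q}$-type block columns.

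I would then parametrize the cycle. Let the two columns live in the $\Tilde{Q}$-type block columns with indices $k_1\ne k_2$ and let the two rows lie in block rows $m_1\ne m_2$; the block of $H^\prime$ at block row $m$ and $\Tilde{Q}$-type block column $k$ equals $\Tilde{Q}_1^{m-k+1}$ whenever $m-k+1\in\{1,\dots,\mu+1\}$. Using $\Tilde{Q}_1^r(a,b)=1\iff b\equiv ra\pmod p$, the four $1$'s of the cycle translate into
\begin{align*}
b_1&\equiv(m_1-k_1+1)a_1,\quad b_2\equiv(m_1-k_2+1)a_1,\\
b_1&\equiv(m_2-k_1+1)a_2,\quad b_2\equiv(m_2-k_2+1)a_2
\end{align*}
modulo $p$, where $a_1,a_2,b_1,b_2\in\{1,\dots,p-1\}$. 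Eliminating $a_1,a_2$ by cross-multiplying (all factors being nonzero and hence invertible modulo $p$) and simplifying yields the single necessary condition
\[
(m_1-m_2)(k_1-k_2)\equiv 0\pmod p.
\]

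The main obstacle is then to exclude this congruence, and this is where the memory bound $\mu\le p-2$ enters essentially. From the requirement that each of the four $\Tilde{Q}_1^{m_j-k_i+1}$ sits in a nonzero position of $H^\prime$ one obtains $m_j\in\{k_i,\dots,k_i+\mu\}$ for every $i,j\in\{1,2\}$. Intersecting these four intervals forces $|k_1-k_2|\le\mu\le p-2$ and hence $|m_1-m_2|\le\mu-|k_1-k_2|\le p-3$. Since $k_1\ne k_2$ and $m_1\ne m_2$, both factors in the displayed congruence are nonzero integers of absolute value strictly less than $p$, hence nonzero modulo the prime $p$. This contradicts the congruence, ruling out any 4-cycle and completing the argument.
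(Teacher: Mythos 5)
Your proof is correct, and while it follows the paper's overall template (realize the hypothetical $4$-cycle as a $2\times 2$ all-ones submatrix that must span two $\Tilde{Q}$-type block columns and two block rows, then translate the four $1$s into congruences modulo $p$), the contradiction is reached by a genuinely different mechanism. The paper eliminates $b_1-b_2$ by linear subtraction of the four congruences and arrives at $(r_2-r_1)(a_1-a_2)\equiv 0 \pmod p$; the essential input there is $a_1\neq a_2$ for the local row indices, which the paper justifies via orthogonality: two of the Latin squares $L_r$ carry the entry $1$ at a common cell only at position $(1,1)$, and exactly this cell is deleted in $\Tilde{Q}_1^r$. You instead eliminate $a_1,a_2$ multiplicatively --- valid because after trimming every local index lies in $\{1,\dots,p-1\}$ and is therefore invertible modulo $p$ --- and obtain the determinant-type identity $(m_1-m_2)(k_1-k_2)\equiv 0 \pmod p$ purely in the block coordinates, which you then refute using the band structure of $H^\prime$ and the memory bound $\mu\le p-2$. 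So both proofs exploit the deletion of the first row and column of $Q_1^r$, but they cash it in differently: the paper uses it to destroy the unique common $1$ of two orthogonal incidence matrices, you use it to make the row indices units modulo $p$; and the paper's contradiction lives in the inner variables (local rows $a_i$ and superscripts $r_j$, as in the proof of Theorem \ref{theo H0 girth 6}), while yours lives entirely in the outer block positions $(m_i,k_j)$, which has the advantage of making explicit where $\mu\le p-2$ enters. Two minor remarks: your assertion $m_1\neq m_2$ deserves the same one-line justification you gave for $k_1\neq k_2$ (a column of a permutation-matrix block cannot have two $1$s inside a single block row), and the refined bound $|m_1-m_2|\le \mu-|k_1-k_2|$ from intersecting the four intervals is more than you need, since $|m_1-m_2|\le\mu\le p-2$ already makes both factors nonzero modulo the prime $p$.
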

\begin{proof}
    Assume that there exists a cycle of length $4$. Then $H^\prime$ contains the submatrix $\left(\begin{smallmatrix}
        1&1\\1&1
    \end{smallmatrix}\right)$. Similarly to the proof of \Cref{theo H0 girth 6} we get that
    \begin{align}
        \Tilde{Q}_{1}^{r_1}(a_1,b_1) = \Tilde{Q}_{1}^{r_2}(a_1,b_2) = \Tilde{Q}_{1}^{r_1+\alpha}(a_2,b_1) = \Tilde{Q}_{1}^{r_2+\alpha}(a_2,b_2) = 1\notag
    \end{align}
    for $r_1,r_2,a_1,a_2,b_1,b_2,\alpha \in \{1,\dots,p-1\}$ with $r_1\neq r_2$. We know that $a_1\neq a_2$ because two orthogonal Latin squares have a $1$ at the same position exactly once and, by definition, this position is $(1,1)$, which is not part of the modified incidence matrices.
    The following equations and calculations are again all modulo $p$. We get
    \begin{align}
        1&= b_1+1-r_1a_1\label{H prime 1}\\
        1&= b_2+1-r_2a_1\label{H prime 2}\\
        1&= b_1+1-(r_1+\alpha)a_2\label{H prime 3}\\
        1&= b_2+1-(r_2+\alpha)a_2 \label{H prime 4}\\
        (\ref{H prime 1})-(\ref{H prime 2}): 0&= b_1-b_2 +a_1(r_2-r_1)\label{H prime 5}\\
        (\ref{H prime 3})-(\ref{H prime 4}): 0&= b_1-b_2 +a_2(r_2-r_1)\label{H prime 6}\\
        (\ref{H prime 5})-(\ref{H prime 6}): 0&= (r_2-r_1)(a_1-a_2)\notag 
    \end{align}
    which proves that there cannot be a cycle of length $4$, as $a_1\neq a_2$ and $r_1\neq r_2$.
\end{proof}

\begin{remark}
    We conjecture that $\mathcal{C}^\prime$ has girth exactly $6$ for $p\geq 5$ and $\mu\geq 2$. For $p=5$ the $1$s at the positions $\Tilde{Q}_1^2(1,2), \Tilde{Q}_1^1(1,1), \Tilde{Q}_1^3(2,1), \Tilde{Q}_1^2(2,4), \Tilde{Q}_1^1(4,4)$ and $\Tilde{Q}_1^3(4,2)$ form a cycle of length $6$.
\end{remark}

\section{An LDPC block code construction without 4 cycles and 6 cycles}\label{sec blockcode}

In this section, we consider the construction of LDPC block codes from Latin squares presented in \cite{BlockCodes}. Although the codes constructed there do not contain cycles of length $4$, there are still some cycles of length $6$. We are now going to use the methods we used for LDPC convolutional codes in the previous sections to modify the construction from \cite{BlockCodes} to achieve girth $8$.
In the following, we first briefly describe the construction from \cite{BlockCodes}.

\begin{definition}
    A \textbf{one-configuration} is an ordered pair $(V,\mathcal{B})$ where $V$ consists of $v$ elements and $\mathcal{B}$ contains subsets of size $t$ of $V$ called blocks and each pair of elements of $V$ appears together in at most one block.
\end{definition}

The construction in \cite{BlockCodes} depends on the Latin square of order $2m+1$ defined by
\begin{align*}
    L(i,j):=\frac{i+j}{2} \mod (2m+1)
\end{align*}
for a positive integer $m$.
From this Latin square, the authors construct a one-configuration $(V,\mathcal{B})$ where $V:=\{1,\dots,2m+1\}\times \{1,2,3\}$, i.e. $|V|=3(2m+1)$, and $\mathcal{B}$ contains all subsets of the form 
$\{(i,a),(j,a),(L(i,j),a+1\mod 3)\}$ with $a=1,2,3$ and $1\leq i<j\leq 2m+1$, i.e. $|\mathcal{B}|=3\cdot \binom{2m+1}{2}=3m(2m+1)$.
We denote the elements of $V$ by $v_1,\dots,v_{6m+3}$ and the elements of $\mathcal{B}$ with $B_1,\dots,B_{6m^2+3m}$.
Then, the authors use the incidence matrix 
\begin{align*}
    H(i,j)=1 \iff v_i\in B_j
\end{align*}
of this one-configuration as a parity-check matrix of an LDPC block code. It is easy to see that such a matrix is free of $4$-cycles.

By rearranging the columns of $H$, and forming blocks of rows and columns of size $3$, the matrix contains submatrices of size $3\times 3$ which are either $I$, $0$ or $P^2$ where $P^i$ denotes the permutation matrix that results from the identity matrix by shifting the columns $i$ positions to the right.
In each block column there are two identity matrices, say in block rows $i$ and $j$, then there is $P^2$ in block row $L(i,j)$ and the rest is $0$.

Now we can rearrange and group the block columns into $m$ submatrices $M_1,\dots,M_m$ such that $M_\ell$ contains the block columns where the identity matrices are in block rows $i$ and $j$ with $j=i+2\ell\mod (2m+1)$ for $i=1,\dots,2m+1$. This implies that the matrix $P^2$ is in block row $$L(i,j)=\frac{i+j}{2}\mod(2m+1)=\frac{2i+2\ell}{2}\mod(2m+1)=i+\ell\mod(2m+1).$$
So each of the submatrices $M_1,\dots,M_m$ contains exactly one $P^2$ per block row.

For example, for $m=2$ we get
\begin{align*}
    M_1=\begin{pmatrix}
        I&&&I&P^2\\
        P^2&I&&&I\\
        I&P^2&I\\
        &I&P^2&I\\
        &&I&P^2&I
    \end{pmatrix} \quad \text{and} \quad
    M_2 = \begin{pmatrix}
        I&I& &P^2\\
        &I&I& &P^2\\
        P^2&&I&I\\
        &P^2&&I&I\\
        I&&P^2&&I
    \end{pmatrix}
\end{align*}
where the block columns of size $3$ are sorted by increasing $i$ and $H=\begin{pmatrix}
    M_1 &\mid &M_2
\end{pmatrix}$.

\begin{lemma}\cite{Fan2001}
    The permutation matrices $P^{i_1},P^{i_2},P^{i_3},P^{i_4},P^{i_5}$ and $P^{i_6}$ of order $D$ with $i_1,\dots,i_6\in\{0,\dots,D-1\}$ that are arranged in a block cycle as $\left(\begin{smallmatrix}
        P^{i_1}&P^{i_2}\\
        &P^{i_3}&P^{i_4}\\
        P^{i_6}&&P^{i_5}
    \end{smallmatrix}\right)$ contain a cycle of length $6$ if and only if their Fan sum $i_1-i_2+i_3-i_4+i_5-i_6 \mod D$ is zero.
\end{lemma}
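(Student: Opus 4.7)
The plan is to exploit the fact that in the block arrangement
$\bigl(\begin{smallmatrix}P^{i_1}&P^{i_2}&\\&P^{i_3}&P^{i_4}\\P^{i_6}&&P^{i_5}\end{smallmatrix}\bigr)$
every column and every row of the full $3D\times 3D$ matrix contains exactly two $1$'s, so the associated Tanner graph is $2$-regular and therefore decomposes as a disjoint union of even cycles. To detect a $6$-cycle, I would simply walk along one of these component cycles and track when it first closes.

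First I fix the convention $P^i(r,c)=1\iff c\equiv r+i\pmod D$. Then, starting at the $\alpha$-th column of block column $1$ and moving first through $P^{i_1}$, the walk is forced to visit
\begin{align*}
\text{col }\alpha\;&\xrightarrow{P^{i_1}}\;\text{row }\alpha-i_1\ (\text{BR }1)\\
&\xrightarrow{P^{i_2}}\;\text{col }\alpha-i_1+i_2\ (\text{BC }2)\\
&\xrightarrow{P^{i_3}}\;\text{row }\alpha-i_1+i_2-i_3\ (\text{BR }2)\\
&\xrightarrow{P^{i_4}}\;\text{col }\alpha-i_1+i_2-i_3+i_4\ (\text{BC }3)\\
&\xrightarrow{P^{i_5}}\;\text{row }\alpha-i_1+i_2-i_3+i_4-i_5\ (\text{BR }3)\\
&\xrightarrow{P^{i_6}}\;\text{col }\alpha-i_1+i_2-i_3+i_4-i_5+i_6\ (\text{BC }1),
\end{align*}
with all arithmetic modulo $D$. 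Since the walk strictly cycles through the block columns in the order $1\to 2\to 3\to 1$ (and correspondingly through the block rows), the six intermediate vertices are automatically distinct, and the walk can only close on returning to block column $1$ at step $6$. This yields a cycle of length exactly $6$ precisely when the final column index coincides with $\alpha$, i.e.\ when
\[
i_1-i_2+i_3-i_4+i_5-i_6\equiv 0\pmod D,
\]
which is exactly the Fan sum condition.

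For the converse, I would note that if the Fan sum is nonzero, then the walk returns to block column $1$ at a shifted offset after every $6$ steps; iterating, the component of the Tanner graph containing the starting column is a single cycle of length $6D/\gcd(D,\,i_1-i_2+i_3-i_4+i_5-i_6)>6$. Since the graph is $2$-regular, no shorter cycle can live inside this component, and repeating the argument from each choice of starting column covers the entire Tanner graph. The main obstacle I expect is purely bookkeeping, namely keeping the sign conventions straight so that a column-to-row step contributes $-i_k$ (using $r=c-i_k$) while a row-to-column step contributes $+i_k$ (using $c=r+i_k$); once this alternation of signs is set up correctly, the Fan sum drops out of the closure condition immediately.
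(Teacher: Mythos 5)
The paper itself offers no proof of this lemma---it is quoted from Fan's work \cite{Fan2001} as a known result---so there is no internal argument to compare against; I can only assess your proposal on its own merits, and it is correct. The $2$-regularity observation (each row and each column of the $3D\times 3D$ matrix meets exactly two permutation blocks, hence carries exactly two $1$s) legitimately reduces the whole question to following the unique non-backtracking walk around the block hexagon; your sign bookkeeping under the convention $P^i(r,c)=1\iff c\equiv r+i\pmod D$ is consistent (column-to-row steps contribute $-i_k$, row-to-column steps $+i_k$); the six visited vertices are distinct because they lie in six distinct blocks; and closure after one lap is exactly the vanishing of the alternating sum, which differs from the Fan sum only by a global sign and so is equivalent to it. Your converse in fact proves more than the statement asks: every component of the Tanner graph is a single cycle of length $6D/\gcd(D,s)$, where $s$ denotes the signed Fan sum, so for $s\not\equiv 0\pmod D$ the girth of the block hexagon is exactly $6D/\gcd(D,s)$ rather than merely larger than $6$. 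This is essentially the standard argument underlying Fan's lemma (and the general principle for quasi-cyclic LDPC codes that a block cycle supports a Tanner-graph cycle if and only if the alternating sum of the circulant shifts vanishes), so it would slot into the paper as a self-contained replacement for the citation.
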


Because of that, in \cite{BlockCodes}, the block-cycles of length $6$ in $H$ were classified into the following four classes:

\begin{itemize}
    \item [] Class A: only identity matrices are part of the block-cycle
    \item [] Class B: two $P^2$ matrices in the same block row are in the block-cycle
    \item [] Class C: two $P^2$ matrices in different block rows are in the block-cycle
    \item [] Class D: three $P^2$ matrices take part in the block-cycle.
\end{itemize}
\begin{minipage}{0.23\textwidth}
    \begin{align*}
        \begin{pmatrix}
        I&I\\
        &I&I\\
        I&&I
    \end{pmatrix}\\
    \text{ Class A }
    \end{align*}
\end{minipage}
\begin{minipage}{0.23\textwidth}
    \begin{align*}
        \begin{pmatrix}
        I&I\\
        &P^2&P^2\\
        I&&I
    \end{pmatrix}\\
    \text{Class B }\quad  
    \end{align*}
\end{minipage}
\begin{minipage}{0.23\textwidth}
    \begin{align*}
        \begin{pmatrix}
        I&P^2\\
        &I&I\\
        I&&P^2\\
    \end{pmatrix}\\
    \text{Class C }\quad
    \end{align*}
\end{minipage}
\begin{minipage}{0.23\textwidth}
    \begin{align*}
        \begin{pmatrix}
        I&P^2\\
        &I&P^2\\
        P^2&&I
    \end{pmatrix}\\
    \text{Class D }\quad
    \end{align*}
\end{minipage}\\

We are now going to eliminate the cycles of length $6$ in several consecutive steps. 
\begin{itemize}
    \item[] Step 1: 
    \begin{itemize}
        \item[] To eliminate the $6$ cycles of Class D, the authors of \cite{BlockCodes} proposed to replace each $3\times 3$ matrix with an analogous $5\times 5$ matrix.
    \end{itemize}
    \item [] Now we want to replace each entry in $H$ with some matrix to get a parity-check matrix also without $6$ cycles of Classes A, B and C.
    \item[] Step 2:
    \begin{itemize}
        \item[] Since Class C contains two matrices $P^2$ in different block rows, we are going to replace each $1$ in an identity matrix of the initial parity-check matrix with $I_{2m+1}=P^0$, each $0$ with $0_{(2m+1)\times (2m+1)}$ and each $1$ in $P^2$ in block row $i$ of the initial parity-check matrix with $P^i$ of order $2m+1$. Then, the Fan sum of such newly created block cycles is $i-j$ or $j-i$ if the matrices $P^2$ are in block rows $i$ and $j$. This sum can never be zero modulo $2m+1$ as $1\leq i<j\leq 2m+1$,        
       and hence we do not have cycles of Class C in this new construction.
    \end{itemize}
    \item[] Step 3:
    \begin{itemize}
        \item[] 
        For the cycles of Class B we notice that two matrices $P^2$ that are in the same block row are in different submatrices $M_{\ell_1}$ and $M_{\ell_2}$, i.e. $\ell_1,\ell_2\in\{1,\hdots,m\}$, $\ell_1\neq \ell_2$. So we replace each $1$ that is in an identity matrix of the initial parity-check matrix with $I_m=P^0$, each $0$ with $0_{m\times m}$ and each $1$ in $P^2$ in submatrix $M_\ell$ of the initial parity-check matrix with $P^\ell$ of order $m$. So, the Fan sum is equal to $\ell_1-\ell_2$ or $\ell_2-\ell_1$ which cannot be $0$ modulo $m$. So we eliminated the cycles of Class B.
    \end{itemize}
    \item[] Step 4:
    \begin{itemize}
        \item[] Since the cycles in Class A do not contain $P^2$ matrices, it does not matter with what we replace the ones there. So we replace the ones in $P^2$ of the initial parity-check matrix with $I_5$ and all zeros with $0_{5\times 5}$. Each block column of the initial parity-check matrix contains two identity matrices in block rows $i$ and $j$ with $i<j$. We replace each $1$ in the identity matrix in block row $i$ with $P$ of order $5$ and the $1$s in block row $j$ with $P^2$ of order 5. 
        Thus, each block column (consisting of $5$ columns) of all newly created block cycles contains a $P$ and a $P^2$. That means that one of them contributes a positive number to the Fan sum and the other a negative number. Hence, every block column contributes either $1$ or $-1$ to the Fan sum. 
        Then the Fan sum is either $3,1,-1$ or $-3$ and therefore never $0$ modulo $5$.
    \end{itemize}
\end{itemize}

By replacing each entry with a matrix in these four steps, we remove all six cycles from the original parity-check matrix and no new cycles are created. Hence, we obtain a parity-check matrix with girth at least $8$. 

The original parity-check matrix is of size $3(2m+1)\times 3m(2m+1)$. After step 1 the size of the parity-check matrix is $5(2m+1)\times 5m(2m+1)$. The size is then multiplied by $2m+1$ in step 2, by $m$ in step 3 and by $5$ in step 4 resulting in a parity check matrix of size $25m(2m+1)^2\times 25m^2(2m+1)^2$.

\section{Conclusion}
In this paper, we present structured constructions of LDPC codes with large girths using Latin squares. These constructions cover time-variant and time-invariant convolutional codes as well as block codes. 
Our analysis shows that the constructed LDPC convolutional codes perform better when the size of the underlying Latin squares is larger.
An interesting problem for future research would be to investigate whether different combinatorial objects can be used to obtain further constructions of LDPC convolutional codes with large girth.

\section*{Acknowledgements}
This work has been supported by the German research foundation, project number 513811367.

\bibliographystyle{abbrv} 
\bibliography{bib}

\end{document}